\newtheorem{theorem}{Theorem}
\newtheorem{corollary}[theorem]{Corollary}
\newtheorem{lemma}[theorem]{Lemma}
\newtheorem*{intuition}{An intuitive exposition}
\newtheorem*{restate1}{Restatement of Lemma~\ref{recurrencelemma1}}
\newtheorem*{restate2}{Restatement of Lemma~\ref{recurrencelemma2}}
\newcommand{\comment}[1]{}
\newcommand\bs[1]{\boldsymbol{#1}}
\begin{document}
\title{Metric $1$-median selection with fewer queries \footnote{A preliminary version of
this paper appears in {\em Proceedings of the 2017 IEEE International Conference on Applied System Innovation},
Sapporo, Japan.}}

\author{
Ching-Lueh Chang \footnote{Department of Computer Science and
Engineering,
Yuan Ze University, Taoyuan, Taiwan. Email:
clchang@saturn.yzu.edu.tw}
}


\maketitle

\begin{abstract}
Let
$h\colon\mathbb{Z}^+\to\mathbb{Z}^+\setminus\{1\}$
be
any function
such that
$h(n)$ and
$\lceil n^{1/h(n)}\rceil$
are computable from $n$ in $O(h(n)\cdot n^{1+1/h(n)})$ time.
We show
that given
any $n$-point
metric space
$(M,d)$,
the problem of finding
$\mathop{\mathrm{argmin}}_{i\in M}\,\sum_{j\in M}\,d(i,j)$
(breaking ties arbitrarily)
has a deterministic, $O(h(n)\cdot n^{1+1/h(n)})$-time,
$O(n^{1+1/h(n)})$-query, $(2\,h(n))$-approximation and nonadaptive algorithm.
Our proofs modify those of Chang~\cite{Cha15, Cha15CMCT} with the
following improvements:
\begin{itemize}
\item We improve Chang's~\cite{Cha15} query complexity
of $O(h(n)\cdot n^{1+1/h(n)})$ to $O(n^{1+1/h(n)})$, everything else
being equal.
\item Chang's~\cite{Cha15CMCT} unpublished work establishes
our result only when $n$ is a perfect $(h(n))$th power.
\end{itemize}
\end{abstract}

\section{Introduction}

A
metric space
is
a nonempty
set $M$ endowed with a function
$d\colon M\times M\to[\,0,\infty\,)$ such that
\begin{eqnarray*}
d\left(x,x\right)&=&0,\\
d\left(x,y\right)&>&0,\\
d\left(x,y\right)&=&d\left(y,x\right),\\
d\left(x,y\right)+d\left(y,z\right)&\ge& d\left(x,z\right)
\end{eqnarray*}
for all distinct $x$, $y$,
$z\in M$.
Given an $n$-point metric space
$(M,d)$,
{\sc metric $1$-median} asks for
$\mathop{\mathrm{argmin}}_{i\in M}\,\sum_{j\in M}\,d(i,j)$,
breaking ties arbitrarily.
As usual, an
algorithm for {\sc metric $1$-median}
may make any
query $(x,y)\in M^2$ to obtain $d(x,y)$ in $O(1)$ time.
By convention,
if
its sequence of queries
depends only on $M$ but not on $d$,
then it is said to be nonadaptive.
For all $c\ge1$,
a $c$-approximate $1$-median of $(M,d)$
refers to a point $z\in M$
satisfying
$$\sum_{x\in M}\,d\left(z,x\right)\le c\cdot \min_{y\in M}\,\sum_{x\in M}\,d\left(y,x\right).$$
A $c$-approximation algorithm for {\sc metric $1$-median}
is one that always outputs a $c$-approximate $1$-median.
As usual for metric-space problems,
we adopt the real RAM model.
In particular, we need $O(1)$-time real-number additions, multiplications and comparisons.

{\sc Metric $1$-median}
has
a Monte-Carlo
$O(n/\epsilon^2)$-time $(1+\epsilon)$-approximation algorithm
for all
$\epsilon>0$~\cite{Ind99, Ind00}.
Kumar et al.~\cite{KSS10} give
a Monte-Carlo
$O(D\cdot\exp{(1/\epsilon^{O(1)})})$-time
$(1+\epsilon)$-approximation algorithm
for
{\sc metric $1$-median}
in $\mathbb{R}^D$, where
$\epsilon>0$ and $D\in\mathbb{Z}^+$.
Algorithms abound for the more general metric $k$-median problem~\cite{KSS10, GMMMO03, Che09}.
In social network analysis, the closeness centrality of an actor
is the reciprocal of the
average distance from
that actor
to others~\cite{WF94}.
It is a popular measure of importance~\cite{WF94}.
So {\sc metric $1$-median}
can be interpreted as finding
one of the most important actors.
Furthermore, {\sc metric $1$-median}
generalizes
the classical median selection~\cite{CLRS09}.

This paper focuses
on deterministic sublinear-time algorithms
for {\sc metric $1$-median}, where ``sublinear'' means ``$o(n^2)$''
by convention
because there are
$n(n-1)/2$ nonzero distances~\cite{Ind99}.
In particular, we shall improve the following theorem:
\begin{theorem}[{Implicit in~\cite{Cha15, Wu14}}]
\label{oldresultgeneralized}
Let
$h\colon\mathbb{Z}^+\to\mathbb{Z}^+\setminus\{1\}$
be
any function
such that
$h(n)$ and
$\lceil n^{1/h(n)}\rceil$
are computable from $n$ in $O(h(n)\cdot n^{1+1/h(n)})$ time.
Then {\sc metric $1$-median} has a
deterministic, $O(h(n)\cdot n^{1+1/h(n)})$-time,
$O(h(n)\cdot n^{1+1/h(n)})$-query,
$(2\,h(n))$-approximation and nonadaptive algorithm.
\end{theorem}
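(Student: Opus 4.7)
My plan is to follow the tournament strategy implicit in~\cite{Cha15, Wu14}. Set $k=h(n)$ and $m=\lceil n^{1/k}\rceil$, both computable within our time budget by the hypothesis on $h$. Build a nested chain of candidate sets $M=S_0\supseteq S_1\supseteq\cdots\supseteq S_k$ whose sizes shrink by roughly a factor of $m$ per round: in round $i$, partition $S_{i-1}$ into blocks of size at most $m$ by a rule that reads only the point indices (so the algorithm is nonadaptive), query all intra-block distances, and promote to $S_i$ the point $\mathop{\mathrm{argmin}}_{y\in B}\sum_{j\in B}d(y,j)$ from each block $B$. After $k$ rounds, $|S_k|$ is constant and the algorithm outputs any of its elements.

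For the resource analysis, round $i$ queries $\binom{m}{2}$ intra-block distances in each of $O(|S_{i-1}|/m)$ blocks, for a per-round cost of $O(m\cdot|S_{i-1}|)$ queries and comparable real-RAM time. A geometric summation over $i$ bounds the aggregate query and arithmetic count by $O(nm)=O(n^{1+1/h(n)})$; the extra factor of $h(n)$ in Theorem~\ref{oldresultgeneralized} is absorbed by the per-round bookkeeping needed to compute $m=\lceil n^{1/h(n)}\rceil$ and the block indices, which by hypothesis on $h$ is charged at $O(h(n)\cdot n^{1+1/h(n)})$ in total.

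The main obstacle is the approximation analysis. Fix an exact $1$-median $z^*$, write $\mathrm{OPT}=\sum_{j\in M}d(z^*,j)$, and trace $z^*$ through the tournament by setting $w_0=z^*$ and taking $w_i$ to be the representative elected out of the round-$i$ block $B_i$ containing $w_{i-1}$. The selection rule gives $\sum_{j\in B_i}d(w_i,j)\le\sum_{j\in B_i}d(w_{i-1},j)$, and summing the triangle inequality over $j\in B_i$ yields $|B_i|\cdot d(w_i,w_{i-1})\le 2\sum_{j\in B_i}d(w_{i-1},j)$. These two inequalities must be fed into a recurrence on $\sum_{j\in M}d(w_i,j)$ (together with auxiliary quantities on the nested sets $S_i$) that is set up just so each of the $k$ rounds contributes only a bounded \emph{additive} increment to the final cost, producing $\sum_{j\in M}d(w_k,j)\le(2h(n))\cdot\mathrm{OPT}$ after iteration. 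The delicate point — not resolved by a naive triangle bound, which gives a multiplicative per-round blowup roughly $|S_{i-1}|/|B_i|$ — lies in choosing the right invariants; this is precisely the content of the two recurrence lemmas signalled by the paper's preamble, and properly treating the ceiling in $m=\lceil n^{1/h(n)}\rceil$ (rather than assuming $n$ is a perfect $h(n)$th power, as in the unpublished~\cite{Cha15CMCT}) is where I expect the bulk of the technical effort to concentrate.
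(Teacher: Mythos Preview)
Your plan misidentifies the method behind Theorem~\ref{oldresultgeneralized}. Neither~\cite{Cha15} nor the present paper uses a hierarchical tournament. The construction in~\cite{Cha15} (sketched in the paper's intuitive exposition) defines a pseudo-distance $\tilde d(i,\cdot)$ as a sum of $h$ ``hops'' along the $t$-ary digits of the offset, and outputs $\mathop{\mathrm{argmin}}_i\sum_j\tilde d(i,j)$. Lemmas~\ref{recurrencelemma1}--\ref{recurrencelemma2} are \emph{not} approximation-ratio invariants for a tournament; they are purely computational recurrences that let dynamic programming evaluate $\sum_j\hat d(i,j)$ in $O(hnt)$ time. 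The approximation ratio is handled entirely by Lemma~\ref{approximationratiolemma}, whose mechanism is averaging: for a uniformly random start $\bs{u}$, each intermediate hop endpoint $\bs{u}t^k+\cdots\bmod(n-\sigma)$ is again uniformly distributed (this is exactly why one arranges $\gcd(t,n-\sigma)=1$), so applying the triangle inequality through $i'$ to each of the $h$ hops yields an expected $\hat d$-sum of at most $2h$ times the optimal $d$-sum.

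Your tournament, by contrast, has a real gap at the step you yourself flag. From $|B_i|\cdot d(w_i,w_{i-1})\le 2\sum_{j\in B_i}d(w_{i-1},j)$ you can only conclude
\[
n\cdot d(w_i,w_{i-1})\le \frac{2n}{|B_i|}\sum_{j\in B_i}d(w_{i-1},j),
\]
and already in round~$1$ the factor $n/|B_1|\approx m^{h-1}$ is huge while $\sum_{j\in B_1}d(z^*,j)$ can be as large as $\mathrm{OPT}$. There is no invariant on the nested sets $S_i$ that converts this into an additive $O(\mathrm{OPT})$ increment per round: deterministic blocks chosen by index alone carry no information relating $\sum_{j\in B_i}d(w_{i-1},j)$ to a $1/n$ fraction of any global quantity. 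Known tournament-style analyses for median problems give a constant \emph{multiplicative} loss per level, hence $c^{h}$ rather than $2h$. The additive-per-level behavior you want is precisely what the pseudo-distance plus uniform-random-shift averaging buys, and it is absent from your scheme.
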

Theorem~\ref{oldresultgeneralized} is rather general:
Together with
Chang's
lower bound,
it
implies
{\footnotesize 
\begin{eqnarray}
&&
\min\left\{c\ge1\mid \text{{\sc metric $1$-median} has a deterministic
$O(n^{1+\epsilon})$-query $c$-approx.\ algorithm}\right\}\nonumber\\
&=&\min\left\{c\ge1\mid \text{{\sc metric $1$-median} has a deterministic
$O(n^{1+\epsilon})$-time $c$-approx. algorithm}\right\}\nonumber\\
&=&2\left\lceil\frac{1}{\epsilon}\right\rceil\nonumber
\end{eqnarray}
}
for {\em all} constants
$\epsilon\in(0,1)$~\cite{Cha18}.

To prove Theorem~\ref{oldresultgeneralized},
Chang~\cite{Cha15}
designs a
function
$\tilde{d}\colon \{0,1,\ldots,n-1\}^2\to[\,0,\infty\,)$
such that
a $1$-median
w.r.t.\
$\tilde{d}$
is $(2\,h(n))$-approximate w.r.t.\ $d$ and is computable
in $O(h(n)\cdot n^{1+1/h(n)})$ time.
However, $\tilde{d}(\cdot,\cdot)$ depends on
$\Theta(h(n)\cdot n^{1+1/h(n)})$ distances of $d$,
forbidding us to improve the query complexity of
$O(h(n)\cdot n^{1+1/h(n)})$
in
Theorem~\ref{oldresultgeneralized}.
Wu's~\cite{Wu14} algorithm
also
makes $\Theta(h(n)\cdot n^{1+1/h(n)})$ queries.
In contrast,
we
design
a new
function,
$\hat{d}$,
that depends on only
$\Theta(n^{1+1/h(n)})$ distances of $d$
and is otherwise similar to Chang's $\tilde{d}$.
This results in a
deterministic, $O(h(n)\cdot n^{1+1/h(n)})$-time,
$O(n^{1+1/h(n)})$-query,
$(2\,h(n))$-approximation and nonadaptive algorithm
for {\sc metric $1$-median},
shaving the factor of $h(n)$ from
the query complexity in
Theorem~\ref{oldresultgeneralized}.
So our result
strengthens
Theorem~\ref{oldresultgeneralized} {\em whenever}
$h(n)=\omega(1)$.\footnote{For a
concrete example, if we look for
$\tilde{O}(n)$-time algorithms,
then we will need $h(n)=\omega(1)$ in Theorem~\ref{oldresultgeneralized}.}
The idea of our construction of $\hat{d}$
comes from an unpublished
paper of Chang~\cite{Cha15CMCT}.
Aside from
the
design of $\hat{d}$,
our
proofs modify
those of
Chang~\cite{Cha15}
by cumbersome brute force.

As a corollary to
our result,
{\sc metric $1$-median} has
a deterministic, $O(\exp(O(1/\epsilon))\cdot n\log n)$-time,
$O(\exp(O(1/\epsilon))\cdot n)$-query, $(\epsilon\log n)$-approximation and nonadaptive algorithm
for all constants $\epsilon>0$.
For each constant $D\ge1$,
{\sc metric $1$-median}
in $\mathbb{R}^D$
has
a deterministic,
$O(k(\epsilon)\cdot n\log n)$-time,
$O(k(\epsilon)\cdot n)$-space
and $(1+\epsilon)$-approximation algorithm
for
some
function
$k\colon (\,0,\infty\,)\to(\,0,\infty\,)$~\cite[Theorem~9]{BMM03}.
But for general metrics,
we do not even know whether
deterministic
$O(n\log n)$-time
algorithms can be $o(\log n)$-approximate.

Our
improvement of the query complexity in
Theorem~\ref{oldresultgeneralized} to $O(n^{1+1/h(n)})$
almost matches
the following
lower bound,
leaving
a multiplicative gap of
about
$h(n)$
on the query complexity:
\begin{theorem}[\cite{Cha18}]\label{currentlybestlowerbound}
{\sc Metric $1$-median} has no deterministic $o(n^{1+1/(h(n)-1)}/h(n))$-query
$(2\,h(n)\cdot(1-\epsilon))$-approximation algorithms for any constant $\epsilon>0$
and any $h\colon \mathbb{Z}^+\to
\mathbb{Z}^+\setminus\{1\}$ satisfying $h=o(n^{1/(h(n)-1)})$.
\end{theorem}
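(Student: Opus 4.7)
This is an adversary-argument lower bound, and the plan is as follows. Fix a deterministic algorithm $A$ that makes $q = o(n^{1+1/(h(n)-1)}/h(n))$ queries and outputs some $z \in M = \{1,\ldots,n\}$. I would first play $A$ against the oracle that answers every queried distance as $1$, recording the query graph $G$ on $M$ (with $|E(G)| \le q$) and the deterministic output $z$. Because $A$ is nonadaptive in the bad case and deterministic in general, the identities of $G$ and $z$ are fixed by the oracle's choice, so the adversary's remaining task is to extend the all-ones answers to a full metric on which $z$ is far from optimal.

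The combinatorial core is a Moore-type / neighbor-growth estimate. Writing $k = h(n) - 1$, I would show that $|E(G)| = o(n^{1+1/k}/h(n))$ forces the existence of some $v^* \in M \setminus \{z\}$ whose graph-distance-$k$ ball $B_G(v^*, k)$ has size $o(n/h(n))$, and in particular $v^* \notin B_G(z, k)$. Intuitively, if every $k$-ball covered $\Omega(n/h(n))$ vertices, then iterating one edge at a time from a random starting vertex and averaging would force $\Omega(n^{1+1/k}/h(n))$ edges; the side condition $h(n) = o(n^{1/(h(n)-1)})$ is exactly what keeps this estimate non-vacuous.

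Given $v^*$, I would then realise the metric as the shortest-path distance on a weighted graph extending $G$: assign weight $1$ to every queried edge, attach $v^*$ to every vertex of $M$ by an edge of weight $1$, and attach $z$ to every vertex outside $B_G(z, k)$ by an edge of weight $2h(n)(1-\epsilon)$. The triangle inequality is automatic for any shortest-path metric, and the key point is that each queried pair still has distance $1$ under $d$ because every short path through the new $v^*$-hub or $z$-hub has length $\ge 2$, so the oracle's all-ones answers remain consistent. A direct calculation then yields $\sum_{x} d(v^*, x) \le O(n)$ while $\sum_{x} d(z,x) \ge (2h(n)(1-\epsilon))(n - o(n))$, contradicting the assumption that $z$ is a $(2h(n)(1-\epsilon))$-approximate $1$-median.

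The hardest step will be the Moore-type bound with the precise exponent $1 + 1/(h(n)-1)$, rather than the weaker $1 + 1/h(n)$ that would follow from a cruder average-degree argument. Coupling the quantitative neighbor-growth estimate to the constraint that the ``hub'' attachments for $v^*$ and $z$ do not shorten any already-queried distance is what forces both the $h(n)$ factor in the denominator of the query lower bound and the technical hypothesis $h(n) = o(n^{1/(h(n)-1)})$; a looser choice of either parameter would let the adversary's constructed metric clash with some $1$-edge answer through a length-$2$ path via $v^*$ or $z$.
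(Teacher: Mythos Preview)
First, note that the paper does not prove this statement at all: Theorem~\ref{currentlybestlowerbound} is quoted from~\cite{Cha18} as a known lower bound, with no proof given here. So there is no ``paper's own proof'' to compare against.

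That said, your proposed construction has a genuine gap that makes the argument collapse regardless of how the combinatorial Moore-type step is carried out. You attach $v^*$ to \emph{every} vertex of $M$ by a weight-$1$ edge and then take the shortest-path metric. In particular there is a weight-$1$ edge $\{v^*,z\}$, so $d(z,v^*)=1$; consequently, for every $x\in M$,
\[
d(z,x)\le d(z,v^*)+d(v^*,x)\le 1+1=2.
\]
Hence $\sum_{x}d(z,x)\le 2(n-1)$ while $\sum_{x}d(v^*,x)=n-1$, and the ratio you obtain is at most $2$, not $2h(n)(1-\epsilon)$. The long $z$-edges of weight $2h(n)(1-\epsilon)$ you add are simply never shortest paths, because the two-hop route through the hub $v^*$ undercuts them. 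Excluding the single edge $\{v^*,z\}$ does not help either: if $z$ has any neighbour $a$ in $G$, then $d(z,v^*)\le d(z,a)+d(a,v^*)=1+1=2$, and you are back to $d(z,x)\le 3$ for all $x$.

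The underlying obstacle is structural: if $v^*$ is at distance $1$ from (essentially) everyone, the triangle inequality forces $d(z,x)\le d(z,v^*)+1$, so to make $z$ roughly $2h$ times worse than $v^*$ you must arrange $d(z,v^*)\approx 2h$. But the adversary has already answered $1$ on every queried edge incident to $z$, so $d(z,\cdot)$ is pinned to $1$ on $N_G(z)$, and a weight-$1$ hub at $v^*$ then drags $d(z,v^*)$ down to $2$. A correct proof (as in~\cite{Cha18}) has to choose the adversary's answers and the final metric together so that (i) the answered distances are preserved, (ii) $v^*$ is cheap, and (iii) no short detour through $v^*$ rescues $z$; your current hub construction satisfies (i) and (ii) but violates (iii).
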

\comment{ 
that {\sc metric $1$-median}
has no deterministic
$o(n^{1+1/(h(n)-1)}/h(n))$-query $(2\,h(n)-\epsilon)$-approximation
algorithms for any constant $\epsilon>0$ and any $h\colon\mathbb{Z}^+\to\mathbb{Z}^+\setminus\{1\}$
satisfying $h(n)=o(n^{1/(h(n)-1)})$.
}


Although
all deterministic algorithms for {\sc metric $1$-median}
are
provably
outperformed
by
Indyk's randomized algorithm~\cite{Cha18},
their limits are
nonetheless
worth studying because
randomness is a computational resource from the viewpoint of
theoretical computer science.
For example, after oblivious permutation routing on the $n$-node hypercube
is
known
to have
a randomized $O(\log n)$-time algorithm~\cite{Val82},
its deterministic time complexity of $\Omega(\sqrt{n}/{\log n})$
is still considered interesting~\cite{KKT91}.
Similarly, the deterministic
time complexities
of
primality testing
and polynomial identity testing are
considered important (and the latter still open) even after
their randomized counterparts are known to be efficient~\cite{AKS04, KI04}.

Before diving into details, we give an intuitive exposition
as to why we have an $O(n^{1+1/h(n)})$-query $(2\,h(n))$-approximation algorithm.

\begin{intuition}
Take $h\equiv 3$ for an example.
We now
intuitively explain
the approximation ratio
of $2h=6$.
For simplicity, assume
$n$ to be cubic and let
$t=n^{1/3}$.
Clearly,
\begin{eqnarray}
\left\{0,1,\ldots,n-1\right\}=\left\{s_2 t^2+s_1 t+s_0\mid
s_2,s_1,s_0\in\left\{0,1,\ldots,t-1\right\}\right\}.
\label{exactexpressions}
\end{eqnarray}
For all $i\in\{0,1,\ldots,n-1\}$ and $s_2$, $s_1$, $s_0\in\{0,1,\ldots,t-1\}$,
\begin{eqnarray}
&&\hat{d}
\left(i,it^3+s_2 t^2+s_1 t + s_0\bmod{n}\right)\nonumber\\
&\equiv&
d\left(i,it+s_2\bmod{n}\right)\nonumber\\
&+& d\left(it+s_2\bmod{n},it^2+s_2 t + s_1\bmod{n}\right)\nonumber\\
&+& d\left(it^2+s_2 t + s_1\bmod{n},it^3+s_2 t^2+s_1 t+s_0\bmod{n}\right).
\label{fakepseudodistance}
\end{eqnarray}
By
Eq.~(\ref{exactexpressions}), the domain of $\hat{d}$
is $\{0,1,\ldots,n-1\}^2$.

Let $\bs{u}$ and $\bs{v}$  be independent and uniformly random elements
of $\{0,1,\ldots,n-1\}$.
Assuming
$\mathop{\mathrm{gcd}}(t,n)=1$ for now,
it is nontrivial but provable that
$$
\hat{d}\left(\bs{u},\bs{v}\right)
=d\left(\bs{u},\bs{c}_1\right)
+d\left(\bs{c}_1,\bs{c}_2\right)
+d\left(\bs{c}_2,\bs{v}\right)
$$
for {\em uniformly random}
(but dependent) elements $\bs{c}_1$ and $\bs{c}_2$
of $\{0,1,\ldots,n-1\}$.
So for a $1$-median {\rm OPT} w.r.t.\ $d$,
the triangle inequality implies
\begin{eqnarray*}
\hat{d}\left(\bs{u},\bs{v}\right)
&\le&
d\left(\text{\rm OPT},\bs{u}\right)+d\left(\text{\rm OPT},\bs{c}_1\right)\\
&+&
d\left(\text{\rm OPT},\bs{c}_1\right)+d\left(\text{\rm OPT},\bs{c}_2\right)\\
&+&
d\left(\text{\rm OPT},\bs{c}_2\right)+d\left(\text{\rm OPT},\bs{v}\right),
\end{eqnarray*}
whose right-hand side sums $6$ distances from {\rm OPT} to uniformly random points.
Now take
expectations on both sides
to
see that
the average $\hat{d}$-distance
is
at most
$6$ times the average $d$-distance from OPT to all points.
This hints that a $1$-median w.r.t.\ $\hat{d}$ is a $6$-approximate $1$-median
w.r.t.\ $d$.

To
find a $1$-median w.r.t.\ $\hat{d}$, we shall
find
$\sum_{j=0}^{n-1}\,\hat{d}(i,j)$ for all $i\in\{0,1,\ldots,n-1\}$.
By Eq.~(\ref{exactexpressions}),
\begin{eqnarray*}
\sum_{j=0}^{n-1}\,\hat{d}\left(i,j\right)=
\sum_{s_2,s_1,s_0=0}^{t-1}\,
\hat{d}
\left(i,it^3+s_2 t^2+s_1 t + s_0\bmod{n}\right).
\end{eqnarray*}
Furthermore, by Eq.~(\ref{fakepseudodistance}),
\begin{eqnarray}
&&\sum_{s_2,s_1,s_0=0}^{t-1}\,
\hat{d}
\left(i,it^3+s_2 t^2+s_1 t + s_0\bmod{n}\right)\nonumber\\
&=&
\sum_{s_2,s_1,s_0=0}^{t-1}\,
\left[
d\left(i,it+s_2\bmod{n}\right)
\phantom{d\left(it^2+s_2 t + s_1\bmod{n},it^3+s_2 t^2+s_1 t+s_0\right)}
\right.\nonumber\\
&&+ d\left(it+s_2\bmod{n},it^2+s_2 t + s_1\bmod{n}\right)\nonumber\\
&&+
\left.
d\left(it^2+s_2 t + s_1\bmod{n},it^3+s_2 t^2+s_1 t+s_0\bmod{n}\right)
\right].
\label{illustrationofthreelevelsum}
\end{eqnarray}
Albeit nontrivial,
the right-hand side of Eq.~(\ref{illustrationofthreelevelsum})
is a
$3$-level sum (so called because it is taken over three variables $s_2$, $s_1$ and $s_0$)
expressible
using $2$-level sums
of a similar form.
So, unsurprisingly,
it
can
be found by dynamic programming (that builds up sums with increasing
levels).


Now comes the key question:
How
do we shave the factor of $h$
from
the query complexity in
Theorem~\ref{oldresultgeneralized}?
Observe that $\hat{d}$ in
Eq.~(\ref{fakepseudodistance}) depends only on distances of the form
$d(j,jt+s\bmod{n})$, where $j\in\{0,1,\ldots,n-1\}$ and $s\in\{0,1,\ldots,t-1\}$,
for a total of $nt$ distances.
Instead, Chang's~\cite{Cha15} $\tilde{d}$
is
\begin{eqnarray*}
&&
\tilde{d}\left(i,\, i+s_2t^2+s_1t+s_0\bmod{n}\right)\\
&\equiv& d\left(i,\, i+s_2t^2\bmod{n}\right)\\
&+& d\left(i+s_2t^2\bmod{n},\, i+s_2t^2+s_1t\bmod{n}\right)\\
&+& d\left(i+s_2t^2+s_1t\bmod{n},\, i+s_2t^2+s_1t+s_0\bmod{n}\right)
\end{eqnarray*}
for all
$i\in\{0,1,\ldots,n-1\}$ and $s_2$, $s_1$, $s_0\in\{0,1,\ldots,t-1\}$.
Now verify
$\tilde{d}$
to
depend
on all distances of the form
$d(j,j+st^k\bmod{n})$, where $j\in\{0,1,\ldots,n-1\}$, $s\in\{0,1,\ldots,t-1\}$
and $k\in\{0,1,2\}$---there are
$3nt=hnt$ such distances.
In summary,
our
$\hat{d}$
depends on $h$ times
fewer distances
than Chang's $\tilde{d}$ does!
\comment{ 
Aside from the design of $\hat{d}$, our proofs
modify those of
Chang~\cite{Cha15} in a very cumbersome but non-insightful way.
}

Everything so far is equivalent to Chang's~\cite{Cha15CMCT} unpublished work, which assumes
$n$ to be cubic (or, more generally, a perfect $h$th power)
for Eq.~(\ref{exactexpressions}) to hold---Gladly,
this assumption can be removed by slightly modifying
Chang's~\cite{Cha15} dynamic-programming approach.

We have assumed $\mathop{\mathrm{gcd}}(t,n)=1$, which may be false.
To get around, our construction takes suitable $t=\Theta(n^{1/h})$
and $\sigma\in\{0,1\}$
satisfying $\mathop{\mathrm{gcd}}(t,n-\sigma)=1$.
Allowing $\sigma$ to be $1$
makes
our
actual
$\hat{d}$
slightly
different
from that in Eq.~(\ref{fakepseudodistance}) and complicates our proofs
significantly.
\comment{ 
Assume $n$ to be a perfect $(h(n))$th power
and treat $0$, $1$, $\ldots$, $n-1$
as elements in $S^h$ for a set $S$.
Then Chang's~\cite{Cha15} pseudo-distance function is
\begin{eqnarray*}
&&\tilde{d}\left(\left(u_1,u_2,\ldots,u_h\right),
\left(v_1,v_2,\ldots,v_h\right)\right)\\
&=&
\sum_{i=0}^{h-1}\,
d\left(\left(v_1,v_2,\ldots,v_i,u_{i+1},u_{i+2},\ldots,u_h\right),
\left(v_1,v_2,\ldots,v_{i+1},u_{i+2},u_{i+3},\ldots,u_h\right)
\right)
\end{eqnarray*}
for all $u_1$, $u_2$, $\ldots$, $u_h$, $v_1$, $v_2$, $\ldots$,
$v_h\in S$.
}
\end{intuition}

\comment{ 
We reduce the query complexity
in this result to $O(n^{1+1/h(n)})$.

Chang's~\cite{Cha15} approximation algorithm for {\sc metric $1$-median}
uses a
function
$\tilde{d}\colon \{0,1,\ldots,n-1\}^2\to[0,\infty)$
such that
a $1$-median
w.r.t.\
$\tilde{d}$
is $(2\,h(n))$-approximate w.r.t.\ $d$ and is computable
in $O(h(n)\cdot n^{1+1/h(n)})$ time.
However, $\tilde{d}(\cdot,\cdot)$ depends on
$\Omega(h(n)\cdot n^{1+1/h(n)})$ distances of $d$.
Instead, we design a new
function,
$\hat{d}$,
that depends on only
$\Theta(n^{1+1/h(n)})$ distances of $d$.
}

\section{The new pseudo-distance function $\hat{d}$}
\label{pseudometric}

Let
$(\{0,1,\ldots,n-1\},d)$ be a metric space and
$h\colon \mathbb{Z}^+\to\mathbb{Z}^+\setminus\{1\}$
be a computable function.
By
Bertrand's postulate,
there exists a prime number
$t
\in[\,\lceil n^{1/h(n)}\rceil,\, 2\cdot \lceil n^{1/h(n)}\rceil\,]$.
Clearly, $\mathop{\mathrm{gcd}}(n-1,n)=1$.
So the primality of $t$ implies the existence of
$\sigma\in\{0,1\}$ such that
$\mathop{\mathrm{gcd}}(t,n-\sigma)=1$.
For convenience, $h\equiv h(n)$.
For all $j\in \{0,1,\ldots,n-1\}$,
write
\begin{eqnarray}
\left(s_{h-1}(j),s_{h-2}(j),\ldots,s_0(j)\right)
\in\left\{0,1,\ldots,t-1\right\}^h
\nonumber
\end{eqnarray}
for
the
unique
$t$-ary representation of $j$,
following Chang~\cite{Cha15}.\footnote{As $t\ge \lceil n^{1/h}\rceil$,
the $t$-ary representation of any of $0,1,\ldots,n-1$ has at most $h$ digits.}
So
\begin{eqnarray}
\sum_{\ell=0}^{h-1}\,s_{h-1-\ell}(j)\cdot t^{h-1-\ell}=j.
\label{multiaryrepresentation}
\end{eqnarray}
For any predicate $P$,
let $\chi[\,P\,]=1$ if $P$ is true and $\chi[\,P\,]=0$ otherwise.

Define
\begin{eqnarray}
d^{(n-\sigma)}\left(x,y\right)
\equiv
d\left(x\bmod{\left(n-\sigma\right)},\,
y\bmod{\left(n-\sigma\right)}\right)
\label{themodularversionofdistances}
\end{eqnarray}
for all $x$, $y\in\mathbb{N}$.
Clearly,
$d^{(n-\sigma)}$
is symmetric and obeys
the triangle inequality,
just like $d$.
For all $i$, $j\in\{0,1,\ldots,n-\sigma-1\}$,
define
{\small 
\begin{eqnarray}
&&
\hat{d}\left(i,it^h+j\bmod{\left(n-\sigma\right)}\right)
\nonumber\\
&\equiv&
\sum_{k=0}^{h-1}\, d^{(n-\sigma)}
\left(it^k+\sum_{\ell=0}^{k-1}\,s_{h-1-\ell}(j)\cdot t^{k-1-\ell},
\,
it^{k+1}+\sum_{\ell=0}^k\,s_{h-1-\ell}(j)\cdot t^{k-\ell}\right).
\,\,\,\,\,\,\,\,\,\,
\label{pseudodistance}
\end{eqnarray}
}
This
and the triangle inequality for $d^{(n-\sigma)}$
imply
$$
\hat{d}\left(i,it^h+j
\bmod{\left(n-\sigma\right)}\right)
\ge
d^{(n-\sigma)}
\left(i,\, it^h+\sum_{\ell=0}^{h-1}\, s_{h-1-\ell}(j)\cdot
t^{h-1-\ell}\right).
$$
\footnote{Note that $it^k+\sum_{\ell=0}^{k-1}\,s_{h-1-\ell}(j)\cdot t^{k-1-\ell}=i$ when
$k=0$ and $it^{k+1}+\sum_{\ell=0}^k\,s_{h-1-\ell}(j)\cdot t^{k-\ell}=
it^h+\sum_{\ell=0}^{h-1}\, s_{h-1-\ell}(j)\cdot
t^{h-1-\ell}$ when $k=h-1$.}
So by
Eq.~(\ref{multiaryrepresentation}),
\begin{eqnarray}
\hat{d}\left(i,it^h+j\bmod{\left(n-\sigma\right)}\right)
\ge
d^{(n-\sigma)}
\left(i,it^h+j\right)
\label{pseudodistanceislarger}
\end{eqnarray}
for all $i$, $j\in \{0,1,\ldots,n-\sigma-1\}$.

Having defined
$\hat{d}(i,it^h+j\bmod{(n-\sigma)})$
in
Eq.~(\ref{pseudodistance})
for all $i$, $j\in \{0,1,\ldots,n-\sigma-1\}$,
the domain of
$\hat{d}$
is
$\{0,1,\ldots,n-\sigma-1\}^2$.\footnote{Note that each pair in
$\{0,1,\ldots,n-\sigma-1\}^2$
can be written as $(i,it^h+j\bmod{(n-\sigma)})$
for a unique pair $(i,j)\in \{0,1,\ldots,n-\sigma-1\}^2$.}
\comment{ 
So for convenience,
interpret $\hat{d}(x,y)$ as
$$
\hat{d}\left(x\bmod{\left(n-\sigma\right)},\,
y\bmod{\left(n-\sigma\right)}\right).
$$
for all $x$, $y\in\mathbb{N}$.
}
\comment{ 
The following lemma says
that $\hat{d}$
is an upper bound on
$d$.
}
\comment{ 
Eqs.~(\ref{multiaryrepresentation})--(\ref{pseudodistance})
and
the triangle inequality for $d$
imply the following lemma.

\begin{lemma}\label{pseudodistanceislarger}
For all $i$, $j\in\{0,1,\ldots,n-1\}$,
$$
\hat{d}\left(i,it^h+j\right)
\ge
d\left(i,it^h+j\right).
$$
\end{lemma}
}
\comment{ 
\begin{proof}
By Eq.~(\ref{pseudodistance}) and the triangle inequality for $d$,
$$
\hat{d}\left(i,i+j\bmod{n}\right)
\ge
d\left(i,i+\sum_{\ell=0}^{h-1}\, s_\ell(j)\cdot
t^\ell\bmod{n}\right).
$$
This and
Eq.~(\ref{multiaryrepresentation})
complete the proof.
\end{proof}
}
Let
\begin{eqnarray}
i'=\mathop{\mathrm{argmin}}_{i=0}^{n-\sigma-1}\, \sum_{j=0}^{n-1}\,
d\left(i,j\right),
\label{theoptimalpointexcludingthelastone}
\end{eqnarray}
breaking ties arbitrarily.

\begin{intuition}
For simplicity, assume $\sigma=0$.
Pick
independent and uniformly random
elements $\bs{u}$ and $\bs{v}$ of $\{0,1,\ldots,n-1\}$.
Taking $i\leftarrow\bs{u}$ and $j\leftarrow\bs{v}$ in Eq.~(\ref{pseudodistance}),
it can be verified that each of the $h$ summands in
the right-hand side of Eq.~(\ref{pseudodistance})
is a $d$-distance
between uniformly random
points.
That is,
$$
\hat{d}\left(\bs{u},\,\bs{u}t^h+\bs{v}\bmod{n}\right)
=d\left(\bs{u},\bs{c}_1\right)+d\left(\bs{c}_1,\bs{c}_2\right)
+d\left(\bs{c}_2,\bs{c}_3\right)+\cdots+d\left(\bs{c}_{h-1},\bs{v}\right)
$$
for uniformly random (but possibly dependent) elements $\bs{c}_1$, $\bs{c}_2$, $\ldots$,
$\bs{c}_{h-1}$ of $\{0,1,\ldots,n-1\}$.
This and the triangle inequality imply
{\footnotesize 
\begin{eqnarray*}
&&\hat{d}\left(\bs{u},\,\bs{u}t^h+\bs{v}\bmod{n}\right)\\
&\le&
\left(
d\left(
i',
\bs{u}\right)+d\left(i',
\bs{c}_1\right)
\right)
+
\left(
d\left(i',
\bs{c}_1\right)+d\left(i',
\bs{c}_2\right)
\right)
+
\left(
d\left(i',
\bs{c}_2\right)+d\left(i',
\bs{c}_3\right)
\right)
+\cdots+
\left(
d\left(i',
\bs{c}_{h-1}\right)+d\left(i',
\bs{v}\right)
\right),
\end{eqnarray*}
}
whose right-hand side sums $2h$ distances from
$i'$
to uniformly random points.
Now take
expectations on both sides to see that
the average $\hat{d}$-distance is at most $2h$ times
the average $d$-distance from
$i'$
to
all
points.
But, as $\sigma=0$, $i'$ is a $1$-median w.r.t.\ $d$
by Eq.~(\ref{theoptimalpointexcludingthelastone}).
So
it is intuitive to guess that a
$1$-median w.r.t.\ $\hat{d}$
is a $(2h)$-approximate $1$-median w.r.t.\ $d$.
The
next
lemma, whose proof embeds the above idea with
technicalities,
confirms this
guess.\footnote{When
$\sigma=0$, $\alpha$ in Eq.~(\ref{theoptimalpointwithrespecttopseudodistance})
is a $1$-median w.r.t.\ $\hat{d}$, and the right-hand side of
Eq.~(\ref{label20170701_16pm_55min_b}) is $2h$ times the total $d$-distance
from
a $1$-median w.r.t.\ $d$
to all
points.
So Lemma~\ref{approximationratiolemma} implies that when $\sigma=0$, a $1$-median w.r.t.\
$\hat{d}$ is a $(2h)$-approximate $1$-median w.r.t.\ $d$.}
\end{intuition}
\comment{ 
When $\sigma=0$, the following lemma
says
that a $1$-median
w.r.t.\
$\hat{d}$
is a $(2h)$-approximate
$1$-median w.r.t.\
$d$.
}

\begin{lemma}[{cf.~\cite[Lemma~4]{Cha15}}]
\label{approximationratiolemma}
Let
\begin{eqnarray}
\alpha
=\mathop{\mathrm{argmin}}_{i=0}^{n-\sigma-1}\,
\left(
\chi\left[\,\sigma=1\,\right]\cdot d\left(i,n-1\right)
+
\sum_{j=0}^{n-\sigma-1}\,\hat{d}\left(i,\, it^h+j
\bmod{\left(n-\sigma\right)}
\right)
\right),
\label{theoptimalpointwithrespecttopseudodistance}
\end{eqnarray}
breaking ties arbitrarily.
Then
{\small 
\begin{eqnarray}
&&\sum_{j=0}^{n-1}\, d\left(\alpha,j\right)
\nonumber\\
&\le&
\chi\left[\,\sigma=1\,\right]\cdot d\left(\alpha,n-1\right)
+
\sum_{j=0}^{n-\sigma-1}\,
\hat{d}\left(\alpha,\, \alpha t^h+j
\bmod{\left(n-\sigma\right)}
\right)\label{label20170701_16pm_55min_a}\\
&\le& 2h
\cdot
\left(
\min_{i=0}^{n-\sigma-1}\,\sum_{j=0}^{n-1}\,d\left(i,j\right)
\right)
-
\chi\left[\,\sigma=1\,\right]\cdot\left(
\left(2h-1\right)\cdot d\left(i',n-1\right)
-\frac{1}{n-1}\cdot \sum_{j=0}^{n-2}\,
d\left(i',j\right)
\right).
\,\,\,\,\,\,\,\,\,\,\,\,\label{label20170701_16pm_55min_b}
\end{eqnarray}
}
\end{lemma}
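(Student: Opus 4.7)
The plan is to prove inequalities~(\ref{label20170701_16pm_55min_a}) and~(\ref{label20170701_16pm_55min_b}) separately. For~(\ref{label20170701_16pm_55min_a}), I would apply Eq.~(\ref{pseudodistanceislarger}) to every summand on the right, giving $\hat{d}(\alpha,\alpha t^h+j\bmod(n-\sigma))\ge d(\alpha,(\alpha t^h+j)\bmod(n-\sigma))$ (using $\alpha\le n-\sigma-1<n-\sigma$ to drop the outer reduction on~$\alpha$). Since $\gcd(t,n-\sigma)=1$ implies $\gcd(t^h,n-\sigma)=1$, the map $j\mapsto (\alpha t^h+j)\bmod(n-\sigma)$ is a bijection on $\{0,1,\ldots,n-\sigma-1\}$, so re-indexing produces $\sum_{r=0}^{n-\sigma-1}d(\alpha,r)$. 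When $\sigma=1$ the extra $d(\alpha,n-1)$ term on the right supplies the missing summand for $j=n-1$, completing~(\ref{label20170701_16pm_55min_a}).

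For~(\ref{label20170701_16pm_55min_b}) my plan is an averaging argument exploiting the modular structure of~$\hat{d}$. By the minimality of~$\alpha$ in Eq.~(\ref{theoptimalpointwithrespecttopseudodistance}), the minimized quantity is at most the average over $i\in\{0,1,\ldots,n-\sigma-1\}$ of the same expression. I would then bound the main piece $\frac{1}{n-\sigma}\sum_{i,j=0}^{n-\sigma-1}\hat{d}(i,it^h+j\bmod(n-\sigma))$ by expanding each $\hat{d}$ via Eq.~(\ref{pseudodistance}) and inserting~$i'$ into every one of the $h$ summands through the triangle inequality for~$d^{(n-\sigma)}$, producing $2h$ distance terms anchored at~$i'$ for each $(i,j)$.

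The crucial observation is that every intermediate point arising this way has the form $it^m+c(j)$, where $m\in\{0,1,\ldots,h\}$ and $c(j)$ is independent of~$i$. Because $\gcd(t,n-\sigma)=1$, for each fixed $j$ the map $i\mapsto (it^m+c(j))\bmod(n-\sigma)$ permutes $\{0,1,\ldots,n-\sigma-1\}$, so every inner sum over~$i$ evaluates to $\sum_{r=0}^{n-\sigma-1}d(i',r)$. Summing over~$j$ and dividing by $n-\sigma$ then shows $\frac{1}{n-\sigma}\sum_{i,j}\hat{d}(i,it^h+j\bmod(n-\sigma))\le 2h\sum_{r=0}^{n-\sigma-1}d(i',r)$, which equals $2h\cdot\min_{i=0}^{n-\sigma-1}\sum_{j=0}^{n-1}d(i,j)$ whenever $\sigma=0$ by Eq.~(\ref{theoptimalpointexcludingthelastone}).

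The main obstacle will be the bookkeeping when $\sigma=1$. On the one hand, averaging the $\chi[\sigma=1]$ term in Eq.~(\ref{theoptimalpointwithrespecttopseudodistance}) contributes $\frac{1}{n-1}\sum_{i=0}^{n-2}d(i,n-1)$, which I would bound via $d(i,n-1)\le d(i,i')+d(i',n-1)$ by $d(i',n-1)+\frac{1}{n-1}\sum_{j=0}^{n-2}d(i',j)$. On the other hand, $\sum_{r=0}^{n-\sigma-1}d(i',r)$ now omits the point $j=n-1$, so rewriting $2h\sum_{r=0}^{n-2}d(i',r)=2h\sum_{j=0}^{n-1}d(i',j)-2h\,d(i',n-1)$ and combining the two bounds should yield precisely the correction $-\chi[\sigma=1]\bigl((2h-1)\,d(i',n-1)-\frac{1}{n-1}\sum_{j=0}^{n-2}d(i',j)\bigr)$ on the right-hand side of~(\ref{label20170701_16pm_55min_b}); verifying that the arithmetic matches exactly is the delicate step.
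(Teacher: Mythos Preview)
Your proposal is correct and follows essentially the same route as the paper's proof. The only cosmetic difference is that the paper phrases the averaging step probabilistically---choosing a uniformly random $\bs{u}\in\{0,1,\ldots,n-\sigma-1\}$ and taking expectations---whereas you write it as a deterministic average over $i$; the key bijection (each intermediate point $it^m+c(j)\bmod(n-\sigma)$ sweeps out all residues as $i$ varies, because $\gcd(t,n-\sigma)=1$) and the $\sigma=1$ bookkeeping via $d(i,n-1)\le d(i,i')+d(i',n-1)$ are identical in both.
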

\begin{proof}
Clearly,
\begin{eqnarray}
\sum_{j=0}^{n-1}\, d\left(\alpha,j\right)
&\stackrel{\text{(\ref{themodularversionofdistances})}}{=}&
\chi\left[\,\sigma=1\,\right]\cdot d\left(\alpha,n-1\right)
+
\sum_{j=0}^{n-\sigma-1}\,
d^{(n-\sigma)}\left(\alpha,j\right)\label{needequationnumberforasmallstep1} \\
&=&
\chi\left[\,\sigma=1\,\right]\cdot d\left(\alpha,n-1\right)
+
\sum_{j=0}^{n-\sigma-1}\, d^{(n-\sigma)}
\left(\alpha,\alpha
t^h+j
\right),
\,\,\,\,\,\,
\nonumber
\end{eqnarray}
where the second equality uses
Eq.~(\ref{themodularversionofdistances}) and
the one-to-one
correspondence of
$j\mapsto \alpha t^h+j\bmod{(n-\sigma)}$
for $j\in\{0,1,\ldots,n-\sigma-1\}$.

Pick
$\bs{u}$
from
$\{0,1,\ldots,n-\sigma-1\}$
uniformly at random.
Then
{\footnotesize 
\begin{eqnarray}
&&
\chi\left[\,\sigma=1\,\right]\cdot d\left(\alpha,n-1\right)
+
\sum_{j=0}^{n-\sigma-1}\,
d^{(n-\sigma)}\left(\alpha, \alpha t^h+j
\right)
\nonumber\\
&\stackrel{\text{(\ref{pseudodistanceislarger})}}{\le}&
\chi\left[\,\sigma=1\,\right]\cdot d\left(\alpha,n-1\right)
+
\sum_{j=0}^{n-\sigma-1}\,
\hat{d}\left(\alpha,\, \alpha t^h+j
\bmod{\left(n-\sigma\right)}
\right)
\label{needequationnumberforasmallstep2}\\
&\stackrel{\text{(\ref{theoptimalpointwithrespecttopseudodistance})}}{\le}&
\mathop{\mathrm E}\left[\,
\chi\left[\,\sigma=1\,\right]\cdot d\left(\bs{u},n-1\right)
+
\sum_{j=0}^{n-\sigma-1}\,
\hat{d}\left(\bs{u},\, \bs{u}t^h+j
\bmod{\left(n-\sigma\right)}
\right)
\,\right]
\nonumber\\
&\le&
\mathop{\mathrm E}\left[\,
\chi\left[\,\sigma=1\,\right]\cdot \left(d\left(i',\bs{u}\right)+d\left(i',n-1\right)\right)
+
\sum_{j=0}^{n-\sigma-1}\,
\hat{d}\left(\bs{u},\,
\bs{u}t^h+j
\bmod{\left(n-\sigma\right)}
\right)
\,\right]
\nonumber\\
&=&
\chi\left[\,\sigma=1\,\right]\cdot
\left[
\left(
\frac{1}{n-\sigma}\cdot
\sum_{m=0}^{n-\sigma-1}\,
d\left(i',m\right)
\right)
+
d\left(i',n-1\right)
\right]
+
\mathop{\mathrm E}\left[\,
\sum_{j=0}^{n-\sigma-1}\,
\hat{d}\left(\bs{u},\,
\bs{u}t^h+j
\bmod{\left(n-\sigma\right)}
\right)
\,\right],\nonumber
\end{eqnarray}
}
where the last inequality (resp., equality) follows from the triangle
inequality for $d$ (resp., the uniform distribution of $\bs{u}$
over $\{0,1,\ldots,n-\sigma-1\}$).
Furthermore,
{\footnotesize 
\begin{eqnarray}
&&\mathop{\mathrm E}\left[\,
\sum_{j=0}^{n-\sigma-1}\,
\hat{d}\left(\bs{u},\,
\bs{u}t^h+j
\bmod{\left(n-\sigma\right)}
\right)
\,\right]\label{newlabel1on20180608}\\
&\stackrel{\text{(\ref{pseudodistance})}}{=}&
\mathop{\mathrm E}\left[\,
\sum_{j=0}^{n-\sigma-1}\,
\sum_{k=0}^{h-1}\,
d^{(n-\sigma)}
\left(\bs{u}t^k+\sum_{\ell=0}^{k-1}\, s_{h-1-\ell}(j)
\cdot t^{k-1-\ell},\,
\bs{u}t^{k+1}+\sum_{\ell=0}^{k}\, s_{h-1-\ell}(j)\cdot t^{k-\ell}\right)
\,\right]
\nonumber\\
&\le&
\mathop{\mathrm E}\left[\,
\sum_{j=0}^{n-\sigma-1}\,
\sum_{k=0}^{h-1}\, d^{(n-\sigma)}
\left(i',
\bs{u}t^k+\sum_{\ell=0}^{k-1}\, s_{h-1-\ell}(j)
\cdot t^{k-1-\ell}
\right)
+d^{(n-\sigma)}\left(i',
\bs{u}t^{k+1}+\sum_{\ell=0}^{k}\, s_{h-1-\ell}(j)\cdot t^{k-\ell}\right)
\,\right]
\nonumber\\
&\stackrel{\text{(\ref{themodularversionofdistances}) \& (\ref{theoptimalpointexcludingthelastone})}}{=}&
\sum_{j=0}^{n-\sigma-1}\,
\sum_{k=0}^{h-1}\,
\left(
\mathop{\mathrm E}\left[\,
d\left(i',\,
\bs{u}t^k+\sum_{\ell=0}^{k-1}\, s_{h-1-\ell}(j)
\cdot t^{k-1-\ell}
\bmod{\left(n-\sigma\right)}
\right)
\,\right]
\right.\nonumber\\
&&
\left.
+\mathop{\mathrm E}\left[\,d\left(i',\,
\bs{u}t^{k+1}+\sum_{\ell=0}^{k}\, s_{h-1-\ell}(j)\cdot t^{k-\ell}
\bmod{\left(n-\sigma\right)}
\right)
\,\right]
\right),
\label{newlabel2on20180608}
\end{eqnarray}
}
where the
inequality follows from the
triangle inequality for $d^{(n-\sigma)}$.
\comment{ 
By Eq.~(\ref{pseudodistance}),
\begin{eqnarray}
&&
\mathop{\mathrm E}\left[\,
\sum_{j=0}^{n-1}\,
\hat{d}\left(\bs{u},\bs{u}+j\bmod{n}
\right)\,\right]
\nonumber\\
&=&
\mathop{\mathrm E}\left[\,
\sum_{j=0}^{n-1}\,
\sum_{k=0}^{h-1}\, d\left(\bs{u}+\sum_{\ell=h-k}^{h-1}\, s_\ell(j)
\cdot t^\ell\bmod{n},
\bs{u}+\sum_{\ell=h-1-k}^{h-1}\, s_\ell(j)\cdot t^\ell\bmod{n}\right)
\,\right].
\nonumber
\end{eqnarray}
}

Because $\bs{u}$
distributes uniformly at random over
$\{0,1,\ldots,n-\sigma-1\}$ and
$\mathop{\mathrm{gcd}}(t,n-\sigma)=1$
by construction,
\begin{eqnarray*}
\bs{u}t^k + \sum_{\ell=0}^{k-1}\,
s_{h-1-\ell}(j)\cdot t^{k-1-\ell}
\bmod{\left(n-\sigma\right)},\\
\bs{u}t^{k+1} + \sum_{\ell=0}^{k}\,
s_{h-1-\ell}(j)\cdot t^{k-\ell}
\bmod{\left(n-\sigma\right)}
\end{eqnarray*}
are uniformly random elements of
$\{0,1,\ldots,n-\sigma-1\}$
for {\em any} $j\in\{0,1,\ldots,n-\sigma-1\}$ and
$k\in\{0,1,\ldots,h-1\}$.\footnote{Note that
$\mathop{\mathrm{gcd}}(t^k,n-\sigma)=\mathop{\mathrm{gcd}}(t^{k+1},n-\sigma)=1$.}
\footnote{In essence, this observation and
Eqs.~(\ref{newlabel1on20180608})--(\ref{newlabel2on20180608})
follow the intuitive exposition before Lemma~\ref{approximationratiolemma}.}
Therefore,
{\footnotesize 
\begin{eqnarray}
&&
\sum_{j=0}^{n-\sigma-1}\,
\sum_{k=0}^{h-1}\,
\left(
\mathop{\mathrm E}\left[\,
d\left(i',\,
\bs{u}t^k+\sum_{\ell=0}^{k-1}\, s_{h-1-\ell}(j)
\cdot t^{k-1-\ell}
\bmod{\left(n-\sigma\right)}
\right)
\,\right]
\right.\nonumber\\
&&
\left.
+\mathop{\mathrm E}\left[\,d\left(i',\,
\bs{u}t^{k+1}+\sum_{\ell=0}^{k}\, s_{h-1-\ell}(j)\cdot t^{k-\ell}
\bmod{\left(n-\sigma\right)}
\right)
\,\right]
\right)
\nonumber\\
\comment{ 
&&\sum_{j=0}^{n-\sigma-1}\,
\sum_{k=0}^{h-1}\,
\mathop{\mathrm E}\left[\,
d^{(n-\sigma)}\left(i',
\bs{u}t^k+\sum_{\ell=0}^{k-1}\, s_{h-1-\ell}(j)
\cdot t^{k-1-\ell}
\right)
\,\right]
+\mathop{\mathrm E}\left[\,d^{(n-\sigma)}\left(i',
\bs{u}t^{k+1}+\sum_{\ell=0}^{k}\, s_{h-1-\ell}(j)\cdot t^{k-\ell}\right)
\,\right]
\nonumber\\
}
&=&
\sum_{j=0}^{n-\sigma-1}\,
\sum_{k=0}^{h-1}\,
\left(
\frac{1}{n-\sigma}\cdot
\sum_{m=0}^{n-\sigma-1}\,
d\left(i',
m
\right)
+
\frac{1}{n-\sigma}\cdot
\sum_{m=0}^{n-\sigma-1}\,
d\left(i',
m
\right)
\right)\nonumber\\
&=&
2h
\sum_{m=0}^{n-\sigma-1}\,
d\left(i',m
\right).
\label{needequationnumberforasmallstep3}
\end{eqnarray}
}

Now,
{\footnotesize 
\begin{eqnarray}
&&
\sum_{j=0}^{n-1}\, d\left(\alpha,j\right)
\label{label20170701_16pm_a}\\
&\stackrel{\text{(\ref{needequationnumberforasmallstep1})--(\ref{needequationnumberforasmallstep2})}}{\le}&
\chi\left[\,\sigma=1\,\right]\cdot d\left(\alpha,n-1\right)
+
\sum_{j=0}^{n-\sigma-1}\,
\hat{d}\left(\alpha,\, \alpha t^h+j
\bmod{\left(n-\sigma\right)}
\right)
\label{label20170701_16pm_b}\\
&\stackrel{\text{(\ref{needequationnumberforasmallstep2})--(\ref{needequationnumberforasmallstep3})}}{\le}&
\chi\left[\,\sigma=1\,\right]\cdot
\left[
\left(
\frac{1}{n-\sigma}\cdot
\sum_{m=0}^{n-\sigma-1}\,
d\left(i',m\right)
\right)
+
d\left(i',n-1\right)
\right]
+
2h
\sum_{m=0}^{n-\sigma-1}\,
d\left(i',m
\right)\nonumber\\
&=&
\chi\left[\,\sigma=1\,\right]\cdot
\left[
\left(
\frac{1}{n-1}\cdot
\sum_{m=0}^{n-2}\,
d\left(i',m\right)
\right)
%
+
d\left(i',n-1\right)
\right]+
2h
\sum_{m=0}^{n-\sigma-1}\,
d\left(i',m
\right)\nonumber\\
&=&
\chi\left[\,\sigma=1\,\right]\cdot
\left[
\left(
\frac{1}{n-1}\cdot
\sum_{m=0}^{n-2}\,
d\left(i',m\right)
\right)
%
+
d\left(i',n-1\right)
\right]\nonumber\\
&+&
2h\cdot
\left(
\left(
\sum_{m=0}^{n-1}\,
d\left(i',m
\right)\right)
-\chi\left[\,\sigma=1\,\right]\cdot
d\left(i',n-1\right)
\right)\nonumber\\
&=&
2h
\cdot
\left(\sum_{m=0}^{n-1}\,d\left(i',m\right)\right)
-
\chi\left[\,\sigma=1\,\right]\cdot\left(
\left(2h-1\right)\cdot d\left(i',n-1\right)
-\frac{1}{n-1}\cdot \sum_{m=0}^{n-2}\,
d\left(i',m\right)
\right)\nonumber\\
&\stackrel{\text{(\ref{theoptimalpointexcludingthelastone})}}{=}&
2h
\cdot
\left(
\min_{i=0}^{n-\sigma-1}\,\sum_{j=0}^{n-1}\,d\left(i,j\right)
\right)
-
\chi\left[\,\sigma=1\,\right]\cdot\left(
\left(2h-1\right)\cdot d\left(i',n-1\right)
-\frac{1}{n-1}\cdot \sum_{m=0}^{n-2}\,
d\left(i',m\right)
\right).\,\,\,\,\,\,\,\,\,\,\,\,\label{label20170701_16pm_c}
\end{eqnarray}
}
Eqs.~(\ref{label20170701_16pm_b})~and~(\ref{label20170701_16pm_55min_a})
coincide.
Furthermore,
Eqs.~(\ref{label20170701_16pm_b})--(\ref{label20170701_16pm_c})
imply
Eq.~(\ref{label20170701_16pm_55min_b}).
\comment{ 
Finally,
{\small 
\begin{eqnarray}
&&\mathop{\mathrm E}\left[\,
\sum_{j=0}^{n-1}\,
\sum_{k=0}^{h-1}\, d\left(\bs{u}+\sum_{\ell=h-k}^{h-1}\, s_\ell(j)
\cdot t^\ell\bmod{n},
\bs{u}+\sum_{\ell=h-1-k}^{h-1}\, s_\ell(j)\cdot t^\ell\bmod{n}\right)
\,\right]
\nonumber\\
&\le&
\mathop{\mathrm E}\left[\,
\sum_{j=0}^{n-1}\,
\sum_{k=0}^{h-1}\, d\left(i',
\bs{u}+\sum_{\ell=h-k}^{h-1}\, s_\ell(j)
\cdot t^\ell\bmod{n}\right)
+d\left(i',
\bs{u}+\sum_{\ell=h-1-k}^{h-1}\, s_\ell(j)\cdot t^\ell\bmod{n}\right)
\,\right]
\nonumber\\
&=&
\sum_{j=0}^{n-1}\,
\sum_{k=0}^{h-1}\,
\mathop{\mathrm E}\left[\,
d\left(i',
\bs{u}+\sum_{\ell=h-k}^{h-1}\, s_\ell(j)
\cdot t^\ell\bmod{n}\right)
\,\right]
+\mathop{\mathrm E}\left[\,d\left(i',
\bs{u}+\sum_{\ell=h-1-k}^{h-1}\, s_\ell(j)\cdot t^\ell\bmod{n}\right)
\,\right]
\nonumber\\
&=&
\sum_{j=0}^{n-1}\,
\sum_{k=0}^{h-1}\,
\left(
\frac{1}{n}\cdot
\sum_{m=0}^{n-1}\,
d\left(i',
m
\right)
+
\frac{1}{n}\cdot
\sum_{m=0}^{n-1}\,
d\left(i',
m
\right)
\right)\nonumber\\
&=&
2h
\sum_{m=0}^{n-1}\,
d\left(i',m
\right),
\label{endofequations}
\end{eqnarray}
}
where the
inequality follows from the triangle inequality for $d$, and the
second-to-last equality is true because
$\bs{u}+\sum_{\ell=h-k}^{h-1}\, s_\ell(j)\cdot t^\ell\bmod{n}$
distributes uniformly at random over $\{0,1,\ldots,n-1\}$
for {\em any} $j\in \{0,1,\ldots,n-1\}$ and $k\in\{0,1,\ldots,h\}$.
Eqs.~(\ref{startofequations})--(\ref{endofequations})
imply
Eq.~(\ref{approximationinequality}).
}
\end{proof}

\begin{intuition}
When $\sigma=0$,
$\alpha$ in Eq.~(\ref{theoptimalpointwithrespecttopseudodistance})
is a $(2h)$-approximate $1$-median w.r.t.\ $d$ by
Lemma~\ref{approximationratiolemma}.
When $\sigma=1$, $\alpha\in\{0,1,\ldots,n-2\}$ by
Eq.~(\ref{theoptimalpointwithrespecttopseudodistance}),
missing $n-1$ from the
choice
of $\alpha$.
The next lemma
considers
$n-1$
as well.
In particular,
it shows how to pick a $(2h)$-approximate
$1$-median w.r.t.\ $d$
from $\{\alpha,n-1\}$.
In contrast,
Chang~\cite{Cha15, Cha15CMCT}
has no
$\sigma$ and, therefore,
always
includes $n-1$ in
picking
his
analogy of $\alpha$.
So
the next lemma
is independent of his works.
\end{intuition}
\comment{ 
The following lemma
shows how to pick a $(2h)$-approximate
$1$-median (w.r.t.\ $d$)
from $\{\alpha,n-1\}$.
}

\begin{lemma}
\label{selectingfromtwo}
Let $\alpha\in\{0,1,\ldots,n-\sigma-1\}$
be as in
Eq.~(\ref{theoptimalpointwithrespecttopseudodistance}),
breaking ties arbitrarily.
If
\begin{eqnarray}
\chi\left[\,\sigma=1\,\right]\cdot d\left(\alpha,n-1\right)
+\sum_{j=0}^{n-\sigma-1}\,
\hat{d}\left(\alpha,\,
\alpha t^h+j
\bmod{\left(n-\sigma\right)}
\right)
<\sum_{j=0}^{n-1}\, d\left(n-1,j\right),
\label{selectorequation}
\end{eqnarray}
then
\begin{eqnarray}
\sum_{j=0}^{n-1}\, d\left(\alpha,j\right)
\le 2h\cdot \min_{i=0}^{n-1}\,
\sum_{j=0}^{n-1}\,d\left(i,j\right).
\label{thenormalcase}
\end{eqnarray}
Otherwise,
\begin{eqnarray}
\sum_{j=0}^{n-1}\, d\left(n-1,j\right)
\le 2h\cdot \min_{i=0}^{n-1}\,
\sum_{j=0}^{n-1}\,d\left(i,j\right).
\label{thespecialcase}
\end{eqnarray}
\end{lemma}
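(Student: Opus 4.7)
The plan is to apply Lemma~\ref{approximationratiolemma} and case-split on $\sigma\in\{0,1\}$ and on the sign of $C := (2h-1)\,d(i',n-1) - \frac{1}{n-1}\sum_{j=0}^{n-2} d(i',j)$, the quantity subtracted in~(\ref{label20170701_16pm_55min_b}). Throughout, let $\mathrm{OPT}=\min_{i=0}^{n-1}\sum_{j=0}^{n-1}d(i,j)$ and $M=\min_{i=0}^{n-\sigma-1}\sum_{j=0}^{n-1}d(i,j)=\sum_{j=0}^{n-1} d(i',j)$, so $M\ge\mathrm{OPT}$ always, while for $\sigma=1$ one has $\mathrm{OPT}=\min(M,\sum_{j=0}^{n-1}d(n-1,j))$.

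The case $\sigma=0$ is short: the $\chi$-term vanishes and $M=\mathrm{OPT}$, so~(\ref{label20170701_16pm_55min_b}) gives $\sum_{j=0}^{n-1}\hat{d}(\alpha,\,\alpha t^h+j\bmod n)\le 2h\,\mathrm{OPT}$. If~(\ref{selectorequation}) holds, (\ref{label20170701_16pm_55min_a}) immediately yields~(\ref{thenormalcase}); otherwise $\sum_{j=0}^{n-1} d(n-1,j)\le\sum_{j=0}^{n-1}\hat{d}(\alpha,\cdot)\le 2h\,\mathrm{OPT}$ gives~(\ref{thespecialcase}).

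For $\sigma=1$ I would first record the auxiliary triangle-inequality estimate $\sum_{j=0}^{n-1}d(n-1,j)=\sum_{j=0}^{n-2}d(n-1,j)\le(n-1)\,d(i',n-1)+\sum_{j=0}^{n-2}d(i',j)$, which in the regime $C<0$ tightens to $\sum_{j=0}^{n-1} d(n-1,j)<\frac{2h}{2h-1}\sum_{j=0}^{n-2}d(i',j)\le\frac{2h}{2h-1}M$. Now I split. If~(\ref{selectorequation}) holds, then (\ref{label20170701_16pm_55min_a}) combined with the condition gives $\sum_j d(i',j)\le\sum_j d(\alpha,j)<\sum_j d(n-1,j)$, so $M<\sum_j d(n-1,j)$, and therefore $M=\mathrm{OPT}$; for $C\ge 0$, (\ref{label20170701_16pm_55min_b}) then yields $\sum_j d(\alpha,j)\le 2hM-C\le 2h\,\mathrm{OPT}$, while for $C<0$ the auxiliary estimate chains with the condition to give $\sum_j d(\alpha,j)<\sum_j d(n-1,j)<\frac{2h}{2h-1}\mathrm{OPT}\le 2h\,\mathrm{OPT}$. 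If~(\ref{selectorequation}) fails, the case $\sum_j d(n-1,j)=\mathrm{OPT}$ is trivial; otherwise $M=\mathrm{OPT}$, and the failing condition together with~(\ref{label20170701_16pm_55min_b}) yields $\sum_j d(n-1,j)\le d(\alpha,n-1)+\sum_j\hat{d}(\alpha,\cdot)\le 2hM-C$, handling $C\ge 0$, while $C<0$ is again handled by the auxiliary estimate.

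The main obstacle is that the bound $2hM-\chi C$ from Lemma~\ref{approximationratiolemma} strictly exceeds $2h\,\mathrm{OPT}$ exactly when $\sigma=1$ and $C<0$, so Lemma~\ref{approximationratiolemma} alone is too weak in that regime. The resolution exploits the fact that $C<0$ is precisely the statement that $n-1$ lies atypically close to $i'$ relative to the average point, so a direct triangle-inequality bound on $\sum_j d(n-1,j)$ takes over, yielding the factor $\frac{2h}{2h-1}\le 2h$ that closes the gap.
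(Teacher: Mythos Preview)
Your proof is correct and follows essentially the same approach as the paper's: both hinge on Lemma~\ref{approximationratiolemma}, the case-split on the sign of $C$, and the triangle-inequality bound on $\sum_j d(n-1,j)$ in the regime $C<0$. The only difference is organizational---you split first on $\sigma$ and then on~(\ref{selectorequation}) before the sign of $C$, whereas the paper leads with the sign of $C$ and handles both values of $\sigma$ uniformly via the $\chi[\sigma=1]$ factor---but the underlying ingredients are identical.
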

\begin{proof}
Clearly,
\begin{eqnarray}
\sum_{j=0}^{n-1}\, d\left(\alpha,j\right)
&\stackrel{\text{(\ref{themodularversionofdistances})}}{=}&
\chi\left[\,\sigma=1\,\right]
\cdot d\left(\alpha,n-1\right)
+\sum_{j=0}^{n-\sigma-1}\,d^{(n-\sigma)}
\left(\alpha,j\right)
\label{justtrivialequation}\\
&\stackrel{\text{(\ref{pseudodistanceislarger})}}{\le}&
\chi\left[\,\sigma=1\,\right]
\cdot d\left(\alpha,n-1\right)
+\sum_{j=0}^{n-\sigma-1}\,\hat{d}\left(\alpha,j\right)
\nonumber\\
&=&
\chi\left[\,\sigma=1\,\right]
\cdot d\left(\alpha,n-1\right)
+\sum_{j=0}^{n-\sigma-1}\,
\hat{d}\left(\alpha,\,
\alpha t^h+j
\bmod{\left(n-\sigma\right)}
\right),\,\,\,\,\,\,\,\,\,\,\,\,\,
\label{repeatedusedexpansionthatisactuallysimple}
\end{eqnarray}
where the last equality uses the one-to-one
correspondence of
$j\mapsto \alpha t^h+j\bmod{(n-\sigma)}$
for $j\in\{0,1,\ldots,n-\sigma-1\}$.\footnote{Note that Eq.~(\ref{pseudodistanceislarger})
holds for {\em all}
$i$, $j\in\{0,1,\ldots,n-\sigma-1\}$, implying that
$d^{(n-\sigma)}(u,v)\le \hat{d}(u,v)$ for all $u$, $v\in\{0,1,\ldots,n-\sigma-1\}$.}

Next, we
separate the discussion as to whether
\begin{eqnarray}
\left(2h-1\right)\cdot d\left(i',n-1\right)
-\frac{1}{n-1}\cdot \sum_{m=0}^{n-2}\,
d\left(i',m\right)
\ge0.\label{theextradistances}
\end{eqnarray}
\begin{enumerate}[{Case}~(1):]
\item
Eq.~(\ref{theextradistances}) is true.
By
Lemma~\ref{approximationratiolemma},
\begin{eqnarray}
\sum_{j=0}^{n-1}\,d\left(\alpha,j\right)
\le
2h\cdot
\min_{i=0}^{n-\sigma-1}\,
\sum_{j=0}^{n-1}\, d\left(i,j\right).\,\,\,
\label{somederivations20160317_22pm}
\end{eqnarray}
\begin{enumerate}[{Subcase}~(i):]
\item Eq.~(\ref{selectorequation}) is true.
By
Eqs.~(\ref{selectorequation})~and~(\ref{justtrivialequation})--(\ref{repeatedusedexpansionthatisactuallysimple}),
\begin{eqnarray}
\sum_{j=0}^{n-1}\, d\left(\alpha,j\right)
<\sum_{j=0}^{n-1}\, d\left(n-1,j\right).
\nonumber
\end{eqnarray}
This and Eq.~(\ref{somederivations20160317_22pm})
imply Eq.~(\ref{thenormalcase}).
\item Eq.~(\ref{selectorequation}) is false.
By
Eq.~(\ref{label20170701_16pm_55min_b})
of
Lemma~\ref{approximationratiolemma} and
Eq.~(\ref{theextradistances}),
{\small 
\begin{eqnarray}
&&
\chi\left[\,\sigma=1\,\right]\cdot d\left(\alpha,n-1\right)
+
\sum_{j=0}^{n-\sigma-1}\,
\hat{d}\left(\alpha,\,\alpha t^h+j
\bmod{\left(n-\sigma\right)}
\right)
\nonumber\\
&\le& 2h
\cdot
\min_{i=0}^{n-\sigma-1}\,
\sum_{j=0}^{n-1}\,d\left(i,j\right).
\nonumber
\end{eqnarray}
}
This and
the negation of
Eq.~(\ref{selectorequation})
imply
\begin{eqnarray}
\sum_{j=0}^{n-1}\,d\left(n-1,j\right)
\le 2h
\cdot
\min_{i=0}^{n-\sigma-1}\,
\sum_{j=0}^{n-1}\,d\left(i,j\right).
\label{thelastpointisbetterthanpreviousones}
\end{eqnarray}
Eq.~(\ref{thelastpointisbetterthanpreviousones})
implies
Eq.~(\ref{thespecialcase}) (note that
$\sum_{j=0}^{n-1}\, d(n-1,j)$ does not exceed itself).
\end{enumerate}
\item
Eq.~(\ref{theextradistances}) is false.
By the triangle inequality for $d$,
\begin{eqnarray}
\sum_{j=0}^{n-1}\, d\left(n-1,j\right)
&\le&
\sum_{j=0}^{n-1}\,
\left(d\left(i',n-1\right)+d\left(i',j\right)\right)
\label{newlabel20170702a}\\
&=&n\cdot d\left(i',n-1\right)
+\sum_{j=0}^{n-1}\,d\left(i',j\right)\label{newlabel20170702b}.
\end{eqnarray}
Eqs.~(\ref{newlabel20170702a})--(\ref{newlabel20170702b}) and
the negation of
Eq.~(\ref{theextradistances})
imply
\begin{eqnarray}
\sum_{j=0}^{n-1}\, d\left(n-1,j\right)
&<&
n\cdot\frac{1}{2h-1}\cdot \frac{1}{n-1}
\cdot\left(\sum_{m=0}^{n-2}\,d\left(i',m\right)\right)
+\sum_{j=0}^{n-1}\,d\left(i',j\right)
\,\,\,\,\,\,\,\,\,\,\,
\label{stillneedanequationnumberhere}\\
&\le& 2\cdot\sum_{j=0}^{n-1}\,d\left(i',j\right)
\nonumber\\
&\stackrel{\text{(\ref{theoptimalpointexcludingthelastone})}}{=}&2\cdot \min_{i=0}^{n-\sigma-1}\,\sum_{j=0}^{n-1}\,d\left(i,j\right),
\label{stillneedanequationnumberhereagain}
\end{eqnarray}
where the second inequality uses $h\ge 2$.
Eqs.~(\ref{stillneedanequationnumberhere})--(\ref{stillneedanequationnumberhereagain})
imply
Eq.~(\ref{thespecialcase}).
\begin{enumerate}[{Subcase}~(a):]
\item Eq.~(\ref{selectorequation}) is false.
We have proved
Eq.~(\ref{thespecialcase}),
as desired.
\item Eq.~(\ref{selectorequation}) is true.
By
Eqs.~(\ref{selectorequation})~and~(\ref{justtrivialequation})--(\ref{repeatedusedexpansionthatisactuallysimple}),
$$
\sum_{j=0}^{n-1}\, d\left(\alpha,j\right)
<\sum_{j=0}^{n-1}\, d\left(n-1,j\right).
$$
This and Eq.~(\ref{thespecialcase}) (which has been proved)
give Eq.~(\ref{thenormalcase}).
\end{enumerate}
\end{enumerate}
\end{proof}

\comment{ 
When $\sigma=1$, $\alpha\in\{0,1,\ldots,n-2\}$ by
Eq.~(\ref{theoptimalpointwithrespecttopseudodistance}).
In contrast,
Chang~\cite{Cha15}
has no
$\sigma$ and, therefore,
always
includes $n-1$ in
picking
his
analogy of $\alpha$.
So
Lemma~\ref{selectingfromtwo}, which considers $n-1$ separately from $\alpha$,
is independent of his work.
}


\comment{ 
\begin{figure}
\begin{algorithmic}[1]
\FOR{$i\in\{0,1,\ldots,n-1\}$}
  \FOR{$j\in\{0,1,\ldots,n-1\}$}
    \STATE Compute $\hat{d}(i,i+j\bmod{n})$ according to Eq.~(\ref{pseudodistance});
  \ENDFOR
\ENDFOR
\STATE Output $\mathop{\mathrm{argmin}}_{i=0}^{n-1}\,\sum_{j=0}^{n-1}\,\hat{d}(i,i+j\bmod{n})$, breaking ties arbitrarily;
\end{algorithmic}
\caption{Algorithm {\sf simple-median} with input a metric space $(\{0,1,\ldots,n-1\},d)$
and $h\in\mathbb{Z}^+\setminus\{1\}$}
\label{simplealgorithm}
\end{figure}
}

\comment{ 
By Lemma~\ref{approximationratiolemma},
algorithm {\sf simple-median} in Fig.~\ref{simplealgorithm}
is $(2h)$-approximate for {\sc metric $1$-median}.
By Eq.~(\ref{pseudodistance}),
$\hat{d}(\cdot,\cdot)$ depends on $d(p,q)$ only if
$$q-p\equiv s\cdot t^{h-1-k}\pmod{n}$$
for some
$s\in\{0,1,\ldots,t-1\}$
and $k\in\{0,1,\ldots,h-1\}$.
So {\sf simple-median} makes no more than $nht=O(hn^{1+1/h})$ distinct queries to $d$.
Furthermore, it is clearly deterministic and nonadaptive.
Therefore, we
are left
only to improve its running time to $O(hn^{1+1/h})$.
}

\section{Dynamic programming}

By
Lemma~\ref{selectingfromtwo},
one of $\alpha$
and $n-1$
is
a $(2h)$-approximate $1$-median.
This section
finds $\alpha$ by dynamic programming.
Details follow.

Define
$(s'_{h-1},s'_{h-2},\ldots,s'_0)\in\{0,1,\ldots,t-1\}^h$
to be
the $t$-ary representation of $n-\sigma-1$.
So $\sum_{r=0}^{h-1}\,s'_r\cdot t^r=n-\sigma-1$.
For $i\in\{0,1,\ldots,n-\sigma-1\}$
and $m\in\{0,1,\ldots,h-1\}$,
define
{\small 
\begin{eqnarray}
f\left(i,m\right)
&\equiv&
\sum_{s_m,s_{m-1},\ldots,s_0=0}^{t-1}\,
\chi\left[\sum_{r=0}^m\, s_r\cdot t^r\le \sum_{r=0}^m\, s'_r\cdot t^r
\right]\nonumber\\
&\cdot&\sum_{k=0}^m\,
d^{(n-\sigma)}
\left(it^k+\sum_{\ell=0}^{k-1}\, s_{m-\ell}\cdot t^{k-1-\ell},\,
it^{k+1}+\sum_{\ell=0}^k\, s_{m-\ell}\cdot t^{k-\ell}
\right),
\label{subsumlessthanorequalto}\\
g\left(i,m\right)
&\equiv&
\sum_{s_m,s_{m-1},\ldots,s_0=0}^{t-1}\,
\sum_{k=0}^m\,
d^{(n-\sigma)}
\left(it^k+\sum_{\ell=0}^{k-1}\, s_{m-\ell}\cdot t^{k-1-\ell},\,
it^{k+1}+\sum_{\ell=0}^k\, s_{m-\ell}\cdot t^{k-\ell}
\right);
\,\,\,\,\,\,\,\,\,\,\,\,\,\,\,\label{subsumlessthan}
\end{eqnarray}
}
hence
\begin{eqnarray}
f\left(i,0\right)
&=&
\sum_{s_0=0}^{s'_0}\,
d^{(n-\sigma)}\left(i,\, it+s_0\right),
\label{thebasecaseofthefirstfunction}\\
g\left(i,0\right)
&=&
\sum_{s_0=0}^{t-1}\,
d^{(n-\sigma)}\left(i,\, it+s_0\right).
\label{thebasecaseofthesecondfunction}
\end{eqnarray}
Chang
also defines
functions similar to our $f(\cdot,\cdot)$ and
$g(\cdot,\cdot)$~\cite[Eqs.~(8)--(9)]{Cha15}, based on his
pseudo-distance function~\cite[Eq.~(2)]{Cha15}.
Instead, Eqs.~(\ref{subsumlessthanorequalto})--(\ref{subsumlessthan})
are based on $\hat{d}$ in
Eq.~(\ref{pseudodistance}).

\comment{ 
The following lemma
implies that
line~6 of {\sf simple-median}
outputs
$$\mathop{\mathrm{argmin}}_{i=0}^{n-1}\,f\left(i,h-1\right),$$
breaking ties arbitrarily.
}

The
following lemma
implies
$f(i,h-1)=\sum_{j=0}^{n-\sigma-1}\,\hat{d}(i,j)$
for all $i\in\{0,1,\ldots,n-\sigma-1\}$:

\begin{lemma}[{cf.~\cite[Lemma~5]{Cha15}}]
\label{theDPresultisthesumofpseudodistanceslemma}
For all $i\in\{0,1,\ldots,n-\sigma-1\}$,
\begin{eqnarray}
f\left(i,h-1\right)
=\sum_{j=0}^{n-\sigma-1}\,
\hat{d}\left(i,\, it^h+j
\bmod{\left(n-\sigma\right)}
\right).
\label{newlabel20170702_21pm}
\end{eqnarray}
\end{lemma}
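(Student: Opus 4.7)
The plan is to prove Lemma~\ref{theDPresultisthesumofpseudodistanceslemma} by exhibiting a bijection between the summation indices in $f(i,h-1)$ and the points $j\in\{0,1,\ldots,n-\sigma-1\}$, and then checking that, term by term, the integrand in each sum agrees under this bijection.

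First, I would specialize Eq.~(\ref{subsumlessthanorequalto}) to $m=h-1$ and note that the summation variables $(s_{h-1},s_{h-2},\ldots,s_0)\in\{0,1,\ldots,t-1\}^h$ range over all $t$-ary strings of length $h$. By the uniqueness of $t$-ary representations and Eq.~(\ref{multiaryrepresentation}), the map $(s_{h-1},\ldots,s_0)\mapsto j:=\sum_{r=0}^{h-1} s_r t^r$ is a bijection from $\{0,1,\ldots,t-1\}^h$ onto $\{0,1,\ldots,t^h-1\}$. Moreover, since $n-\sigma-1=\sum_{r=0}^{h-1} s'_r t^r$, the indicator $\chi[\sum_{r=0}^m s_r t^r\le\sum_{r=0}^m s'_r t^r]$ restricts this bijection to $\{0,1,\ldots,n-\sigma-1\}$. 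Thus the outer sum in $f(i,h-1)$ is exactly a sum over $j\in\{0,1,\ldots,n-\sigma-1\}$, with $s_r=s_r(j)$ for each $r$.

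Next I would substitute $s_{m-\ell}=s_{h-1-\ell}=s_{h-1-\ell}(j)$ into the inner sum of Eq.~(\ref{subsumlessthanorequalto}) (with $m=h-1$) and compare it to the summand in the definition Eq.~(\ref{pseudodistance}) of $\hat{d}(i,\,it^h+j\bmod(n-\sigma))$. These two expressions are literally identical: both are
\begin{equation*}
\sum_{k=0}^{h-1} d^{(n-\sigma)}\!\left(it^k+\sum_{\ell=0}^{k-1} s_{h-1-\ell}(j)\cdot t^{k-1-\ell},\ it^{k+1}+\sum_{\ell=0}^{k} s_{h-1-\ell}(j)\cdot t^{k-\ell}\right).
\end{equation*}
Summing this equality over $j\in\{0,1,\ldots,n-\sigma-1\}$ then yields Eq.~(\ref{newlabel20170702_21pm}).

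There is no real obstacle here; the only point that requires care is matching the indexing conventions --- specifically, verifying that the ``reversed'' indexing $s_{m-\ell}$ in Eq.~(\ref{subsumlessthanorequalto}) lines up, upon setting $m=h-1$, with the $s_{h-1-\ell}(j)$ appearing in Eq.~(\ref{pseudodistance}), and that the constraint $\sum_r s_r t^r\le \sum_r s'_r t^r$ corresponds exactly to the range $j\in\{0,1,\ldots,n-\sigma-1\}$ via the $t$-ary representation. Once these identifications are made explicit, the lemma follows immediately from Eqs.~(\ref{pseudodistance}) and (\ref{subsumlessthanorequalto}).
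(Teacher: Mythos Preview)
Your proposal is correct and follows essentially the same approach as the paper: both arguments identify the bijection between $t$-ary tuples $(s_{h-1},\ldots,s_0)$ satisfying $\sum_r s_r t^r\le n-\sigma-1$ and integers $j\in\{0,1,\ldots,n-\sigma-1\}$, then verify that the inner sum in Eq.~(\ref{subsumlessthanorequalto}) with $m=h-1$ coincides term-by-term with the definition of $\hat{d}$ in Eq.~(\ref{pseudodistance}). The paper just organizes the computation from the other direction (starting from the right-hand side of Eq.~(\ref{newlabel20170702_21pm})), but the content is identical.
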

\begin{proof}
Representing each $j\in\{0,1,\ldots,n-\sigma-1\}$
in $t$-ary as $(s_{h-1},s_{h-2},\ldots,s_0)$,
\begin{eqnarray}
&&\sum_{j=0}^{n-\sigma-1}\,
\sum_{k=0}^{h-1}\, d^{(n-\sigma)}
\left(it^k+\sum_{\ell=0}^{k-1}\,s_{h-1-\ell}(j)\cdot t^{k-1-\ell},
\,
it^{k+1}+\sum_{\ell=0}^k\,s_{h-1-\ell}(j)\cdot t^{k-\ell}\right)
\nonumber\\
&=&
\sum_{s_{h-1},s_{h-2},\ldots,s_0=0}^{t-1}\,
\chi\left[\sum_{r=0}^{h-1}\, s_r\cdot t^r\le
n-\sigma-1
\right]\nonumber\\
&\cdot&\sum_{k=0}^{h-1}\,
d^{(n-\sigma)}
\left(it^k+\sum_{\ell=0}^{k-1}\,
s_{h-1-\ell}\cdot t^{k-1-\ell},\,
it^{k+1}+\sum_{\ell=0}^k\, s_{h-1-\ell}\cdot t^{k-\ell}
\right).
\label{justaneasyequation}
\end{eqnarray}
Eqs.~(\ref{pseudodistance}),~(\ref{subsumlessthanorequalto})~and~(\ref{justaneasyequation})
complete the proof (recall that
$\sum_{r=0}^{h-1}\,s'_r\cdot t^r= n-\sigma-1$).\footnote{By Eq.~(\ref{pseudodistance}),
the right-hand side of Eq.~(\ref{newlabel20170702_21pm}) coincides with the left-hand
side of Eq.~(\ref{justaneasyequation}).
By Eq.~(\ref{subsumlessthanorequalto}) and recalling that
$\sum_{r=0}^{h-1}\,s'_r\cdot t^r= n-\sigma-1$, the right-hand side of
Eq.~(\ref{justaneasyequation}) is $f(i,h-1)$.}
\comment{ 
As $\sum_{r=0}^{h-1}\,s'_r\cdot t^r= n-\sigma-1$,
\begin{eqnarray}
f\left(i,h-1\right)
&=&
\sum_{s_{h-1},s_{h-2},\ldots,s_0=0}^{t-1}\,
\chi\left[\,\sum_{r=0}^{h-1}\,s_r\cdot t^r\le n-1\,\right]
\nonumber\\
&\cdot&
\sum_{k=0}^{h-1}\,
d\left(
i+\sum_{\ell=h-k}^{h-1}\,s_\ell\cdot t^\ell \bmod{n},
i+\sum_{\ell=h-1-k}^{h-1}\,s_\ell\cdot t^\ell \bmod{n}
\right).
\,\,\,\,\,\label{thethingwewantintheformofwhatwecancompute}
\end{eqnarray}
By the existence and uniqueness of a $t$-ary representation
of each
$j\in\{0,1,\ldots,n-1\}$,
\begin{eqnarray}
&&\sum_{j=0}^{n-1}\,
\sum_{k=0}^{h-1}\,
d\left(
i+\sum_{\ell=h-k}^{h-1}\,s_\ell(j)\cdot t^\ell \bmod{n},
i+\sum_{\ell=h-1-k}^{h-1}\,s_\ell(j)\cdot t^\ell \bmod{n}
\right)
\nonumber\\
&=&
\sum_{s_{h-1},s_{h-2},\ldots,s_0=0}^{t-1}\,
\chi\left[\,\sum_{r=0}^{h-1}\,s_r\cdot t^r\le n-1\,\right]
\nonumber\\
&\cdot&
\sum_{k=0}^{h-1}\,
d\left(
i+\sum_{\ell=h-k}^{h-1}\,s_\ell\cdot t^\ell \bmod{n},
i+\sum_{\ell=h-1-k}^{h-1}\,s_\ell\cdot t^\ell \bmod{n}
\right).
\label{justanequation}
\end{eqnarray}
%
Eqs.~(\ref{pseudodistance})~and~(\ref{thethingwewantintheformofwhatwecancompute})--(\ref{justanequation})
complete the proof.
}
\end{proof}


When $\sigma=0$,
a minimizer of $f(\cdot,h-1)$
is a $(2h)$-approximate $1$-median w.r.t.\ $d$
by
Lemmas~\ref{approximationratiolemma}~and~\ref{theDPresultisthesumofpseudodistanceslemma}.
So we want to
calculate $f(i,h-1)$ for all $i\in\{0,1,\ldots,n-\sigma-1\}$.
Similar to~\cite{Cha15}, we do so by dynamic programming.
For this purpose,
we need the following
recurrences for $g(\cdot,\cdot)$ and $f(\cdot,\cdot)$, whose
very
cumbersome
(but na{\"\i}ve)
proofs are
in
Appendices~\ref{appendix1}--\ref{appendix2},
respectively:

\comment{ 
represent $g(\cdot,m)$
and $f(\cdot,m)$ in terms of $g(\cdot,m-1)$ and $f(\cdot,m-1)$
for $m\in\{1,2,\ldots,h-1\}$,
which together with
Eqs.~(\ref{thebasecaseofthefirstfunction})--(\ref{thebasecaseofthesecondfunction})
allows us to compute
$f(\cdot,0)$, $f(\cdot,1)$, $\ldots$, $f(\cdot,h-1)$, in that order,
by dynamic programming.
}



\begin{lemma}[{cf.~\cite[Lemma~6]{Cha15}}]
\label{recurrencelemma1}
For all $i\in\{0,1,\ldots,n-\sigma-1\}$
and $m\in\{1,2,\ldots,h-1\}$,
\begin{eqnarray}
g\left(i,m
\right)
&=&
t^m\sum_{s_m=0}^{t-1}\,
d^{(n-\sigma)}\left(i,\, it+s_m\right)\nonumber\\
&+&\sum_{s_m=0}^{t-1}\,
g\left(it+s_m
\bmod{\left(n-\sigma\right)},\, m-1
\right).
\nonumber
\end{eqnarray}
\end{lemma}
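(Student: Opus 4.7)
The plan is to unpack definition~(\ref{subsumlessthan}) of $g(i,m)$, split the inner sum over $k$ into the $k=0$ term and the $k\in\{1,\ldots,m\}$ terms, and match each piece to one of the two summands of the claimed recurrence.

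First I would observe that the $k=0$ summand inside~(\ref{subsumlessthan}) equals $d^{(n-\sigma)}(i,\,it+s_m)$ (the sum $\sum_{\ell=0}^{-1}$ is empty and $\sum_{\ell=0}^{0} s_{m-\ell}\,t^{-\ell}=s_m$). This quantity depends only on $s_m$ among the summation variables, so summing it over $s_{m-1},\ldots,s_0\in\{0,1,\ldots,t-1\}$ multiplies it by $t^m$, producing the first summand $t^m\sum_{s_m=0}^{t-1} d^{(n-\sigma)}(i,\,it+s_m)$ of the recurrence.

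Next, for the $k\in\{1,\ldots,m\}$ portion, I would reindex $k=k'+1$ (so $k'\in\{0,\ldots,m-1\}$) and peel off the $\ell=0$ term in each of the two sums defining the arguments of $d^{(n-\sigma)}$. A direct computation with the further substitution $\ell'=\ell-1$ then yields
$$it^k+\sum_{\ell=0}^{k-1} s_{m-\ell}\, t^{k-1-\ell} \;=\; (it+s_m)\,t^{k'}+\sum_{\ell'=0}^{k'-1} s_{(m-1)-\ell'}\, t^{k'-1-\ell'}$$
and the fully analogous identity for $it^{k+1}+\sum_{\ell=0}^{k} s_{m-\ell}\, t^{k-\ell}$, with $(k'+1)$ replacing $k'$. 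Because $d^{(n-\sigma)}(x,y)$ depends on $x,y$ only modulo $n-\sigma$ by~(\ref{themodularversionofdistances}), I may freely replace $it+s_m$ by $it+s_m\bmod(n-\sigma)$ inside these expressions without altering the value.

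After this rewriting, the remaining summation over $s_{m-1},\ldots,s_0\in\{0,\ldots,t-1\}$ and $k'\in\{0,\ldots,m-1\}$ is, term by term, precisely the summation appearing in definition~(\ref{subsumlessthan}) of $g(it+s_m\bmod(n-\sigma),\,m-1)$, with the remaining indices $s_{m-1},\ldots,s_0$ playing at level $m-1$ the roles that $s_{m-1},\ldots,s_0$ play in that definition. Summing over $s_m\in\{0,\ldots,t-1\}$ then yields the second summand of the recurrence, completing the proof. The main obstacle is purely notational: keeping the two index shifts $k\mapsto k-1$ and $\ell\mapsto\ell-1$ aligned while tracking which index governs the new base point $it+s_m$ versus the inner $t$-ary digits of the recursive $g$-call.
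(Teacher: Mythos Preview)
Your proposal is correct and follows essentially the same approach as the paper's proof: split off the $k=0$ term to obtain the $t^m\sum_{s_m} d^{(n-\sigma)}(i,it+s_m)$ summand, then perform the two index shifts $k\mapsto k-1$ and $\ell\mapsto\ell-1$ on the remaining terms to recognise $g(it+s_m\bmod(n-\sigma),\,m-1)$. The only organisational difference is that the paper isolates your two-step reindexing as a separate change-of-variable lemma (Lemma~\ref{tedioustwicechangeofvariables}), which it then also reuses in the proof of Lemma~\ref{recurrencelemma2}.
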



\comment{ 
The
recurrence for $f(\cdot,\cdot)$
has a tedious proof
without
special techniques, as follows.
}

\begin{lemma}[{cf.~\cite[Lemma~7]{Cha15}}]
\label{recurrencelemma2}
For all $i\in\{0,1,\ldots,n-\sigma-1\}$ and $m\in\{1,2,\ldots,h-1\}$,
\begin{eqnarray}
f\left(i,m\right)
&=&
\left(1+\sum_{r=0}^{m-1}\, s'_r\cdot t^r\right)
d^{(n-\sigma)}
\left(i,it+s'_m\right)\nonumber\\
&+&
t^m
\sum_{s_m=0}^{s'_m-1}\,
d^{(n-\sigma)}
\left(i,it+s_m\right)\nonumber\\
&+&
f\left(it+s'_m
\bmod{\left(n-\sigma\right)},\,
m-1\right)\nonumber\\
&+&\sum_{s_m=0}^{s'_m-1}\, g\left(it+s_m
\bmod{\left(n-\sigma\right)},\,
m-1
\right).
\nonumber
\end{eqnarray}
\end{lemma}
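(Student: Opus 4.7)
The plan is to expand the definition of $f(i,m)$ and peel off the highest-order digit $s_m$, reducing everything to instances of $f(\cdot,m-1)$ and $g(\cdot,m-1)$ evaluated at the shifted base point $it+s_m\bmod(n-\sigma)$. I will split the outer sum over $s_m\in\{0,1,\ldots,t-1\}$ into three ranges: $s_m>s'_m$, $s_m=s'_m$, and $s_m<s'_m$. When $s_m>s'_m$ the indicator $\chi\bigl[\sum_{r=0}^m s_r t^r\le\sum_{r=0}^m s'_r t^r\bigr]$ is $0$ regardless of the lower digits, so this range contributes nothing. When $s_m=s'_m$ the indicator collapses to $\chi\bigl[\sum_{r=0}^{m-1} s_r t^r\le\sum_{r=0}^{m-1}s'_r t^r\bigr]$, which is precisely the indicator appearing in the definition of $f(\cdot,m-1)$. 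When $s_m<s'_m$ the indicator is identically $1$ regardless of $s_{m-1},\ldots,s_0$ (because the maximum attainable value of $\sum_{r=0}^{m-1}s_r t^r$ is $t^m-1$, still less than $(s'_m-s_m)t^m\le\sum_{r=0}^m s'_r t^r-s_m t^m$), which matches the unconstrained sum in $g(\cdot,m-1)$.

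Next, within each surviving case I will split the inner sum over $k\in\{0,1,\ldots,m\}$ into the $k=0$ term and the $k\ge 1$ terms. The $k=0$ summand is $d^{(n-\sigma)}(i,it+s_m)$, which depends only on $s_m$, so summing it over $(s_{m-1},\ldots,s_0)\in\{0,1,\ldots,t-1\}^m$ just multiplies it by a counting factor: in the $s_m=s'_m$ branch that factor is $1+\sum_{r=0}^{m-1}s'_r t^r$ (the number of $m$-tuples obeying the residual indicator, counted via uniqueness of $t$-ary representation), and in the $s_m<s'_m$ branch it is $t^m$. This produces the first two terms of the claimed recurrence.

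For the $k\ge 1$ terms I will perform the substitution $k'=k-1$ and the reindexing $\ell'=\ell-1$, using the algebraic identities
\begin{align*}
it^k+\sum_{\ell=0}^{k-1}s_{m-\ell}t^{k-1-\ell} &= (it+s_m)\,t^{k'}+\sum_{\ell'=0}^{k'-1}s_{(m-1)-\ell'}\,t^{k'-1-\ell'},\\
it^{k+1}+\sum_{\ell=0}^{k}s_{m-\ell}t^{k-\ell} &= (it+s_m)\,t^{k'+1}+\sum_{\ell'=0}^{k'}s_{(m-1)-\ell'}\,t^{k'-\ell'},
\end{align*}
which place each distance in exactly the form appearing in the definition of $f(\cdot,m-1)$ or $g(\cdot,m-1)$ at base point $it+s_m\bmod(n-\sigma)$. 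Under these identifications the $k\ge 1$ contribution in the $s_m=s'_m$ branch becomes $f(it+s'_m\bmod(n-\sigma),m-1)$, and in the $s_m<s'_m$ branch it becomes $\sum_{s_m=0}^{s'_m-1}g(it+s_m\bmod(n-\sigma),m-1)$. Summing the four pieces yields exactly the stated recurrence.

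I expect the reindexing step to be the main obstacle, not conceptually but notationally: the digit-substitution $s_{m-\ell}=s_{(m-1)-(\ell-1)}$ together with the shift $k\mapsto k-1$ and the absorption of $s_m$ into the base coordinate must be performed carefully to keep the boundary cases (namely $k=0$ split off at the top, and $k'=0,\,\ell'=-1$ conventions) consistent with the definitions in Eqs.~(\ref{subsumlessthanorequalto})--(\ref{subsumlessthan}). All other steps are bookkeeping, exploiting the uniqueness of $t$-ary representations and the fact that the indicator separates cleanly on the leading digit.
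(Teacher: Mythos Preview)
Your proposal is correct and follows essentially the same route as the paper: the paper likewise splits on the leading digit via items~(\ref{myitem1})--(\ref{myitem3}), peels off the $k=0$ term (packaged there as Lemma~\ref{summingsimpledistances}), and performs the same reindexing $k\mapsto k-1$, $\ell\mapsto\ell-1$ (packaged as Lemma~\ref{tedioustwicechangeofvariables}) to recognize $f(\cdot,m-1)$ and $g(\cdot,m-1)$ at the shifted base point. The only cosmetic difference is the order of operations---the paper reindexes first and then splits on $s_m$, whereas you split first---and your inline derivation of the identities replaces the paper's auxiliary lemmas.
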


\comment{ 
We move the
cumbersome
proofs of Lemmas~\ref{recurrencelemma1}--\ref{recurrencelemma2}
to Appendices~\ref{appendix1}--\ref{appendix2}, respectively.
}

\begin{figure}
\begin{algorithmic}[1]
\STATE
Pick any prime number $t\in[\,\lceil n^{1/h}\rceil,\,2\cdot \lceil n^{1/h}\rceil\,]$;
\STATE Pick any $\sigma\in\{0,1\}$ satisfying $\mathop{\mathrm{gcd}}
(t,n-\sigma)=1$;
\FOR{$i=0$, $1$, $\ldots$, $n-\sigma-1$}
  \FOR{$s=0$, $1$, $\ldots$, $t-1$}
    \STATE Query for $d(i,it+s\bmod{(n-\sigma)})$;
  \ENDFOR
  \STATE Query for $d(n-1,i)$;
\ENDFOR
\STATE
$(s'_{h-1},s'_{h-2},\ldots,s'_0)
\leftarrow\text{the $t$-ary representation of $n-\sigma-1$}$;
\FOR{$i=0$, $1$, $\ldots$, $n-\sigma-1$}
  \STATE $f[i][0]\leftarrow \sum_{s_0=0}^{s'_0}\, d(i,it+s_0\bmod{(n-\sigma)})$;
  \STATE $g[i][0]\leftarrow \sum_{s_0=0}^{t-1}\, d(i,it+s_0\bmod{(n-\sigma)})$;
\ENDFOR
\FOR{$m=1$, $2$, $\ldots$, $h-1$}
  \FOR{$i=0$, $1$, $\ldots$, $n-\sigma-1$}
    \STATE $g[i][m]\leftarrow t^m\sum_{s_m=0}^{t-1}\,
d(i,it+s_m\bmod{(n-\sigma)})$;
    \STATE $g[i][m]\leftarrow g[i][m]+\sum_{s_m=0}^{t-1}\,
g[it+s_m\bmod{(n-\sigma)}][m-1]$;
    \STATE $f[i][m]\leftarrow (1+\sum_{r=0}^{m-1}\,s'_r\cdot t^r)\,
d(i,it+s'_m\bmod{(n-\sigma)})$;
    \STATE $f[i][m]\leftarrow f[i][m]+t^m\sum_{s_m=0}^{s'_m-1}\,
d(i,it+s_m\bmod{(n-\sigma)})$;
    \STATE $f[i][m]\leftarrow f[i][m]+f[it+s'_m\bmod{(n-\sigma)}][m-1]$;
    \STATE $f[i][m]\leftarrow f[i][m]+\sum_{s_m=0}^{s'_m-1}\,
g[it+s_m\bmod{(n-\sigma)}][m-1]$;
  \ENDFOR
\ENDFOR
\STATE
$\alpha\leftarrow \mathop{\mathrm{argmin}}_{i=0}^{n-\sigma-1}\,
(\chi[\sigma=1]\cdot d(i,n-1)+f[i][h-1])$,
breaking ties arbitrarily;
\IF{$\chi[\sigma=1]\cdot d(\alpha,n-1)+f[\alpha][h-1]
<\sum_{j=0}^{n-1}\,d(n-1,j)$}
  \STATE Output $\alpha$;
\ELSE
  \STATE Output $n-1$;
\ENDIF
\end{algorithmic}
\caption{Algorithm {\sf Approx.-Median}}
\label{mainalgorithm}
\end{figure}

\comment{ 
Lemmas~\ref{recurrencelemma1}--\ref{recurrencelemma2}
allow us to compute
$f(\cdot)$
by dynamic programming.
}

\comment{ 
\begin{lemma}\label{theoutputminimizessomething}
Line~17 of
{\sf Approx.-Median} in Fig.~\ref{mainalgorithm}
outputs $\mathop{\mathrm{argmin}}_{i=0}^{n-1}\,
f(i,h-1)$, breaking ties arbitrarily.
\end{lemma}
\begin{proof}
By
Eqs.~(\ref{thebasecaseofthefirstfunction})--(\ref{thebasecaseofthesecondfunction}),
lines~3--6 of {\sf Approx.-Median}
find $f(\cdot,0)$ and $g(\cdot,0)$.
By
Lemmas~\ref{recurrencelemma1}~and~\ref{recurrencelemma2},
lines~8--15
derive
$f(\cdot,m)$ and $g(\cdot,m)$
from $f(\cdot,m-1)$ and $g(\cdot,m-1)$.
So lines~7--15 eventually find $f(\cdot,h-1)$ and $g(\cdot,h-1)$.
\end{proof}
}

Lemmas~\ref{recurrencelemma1}--\ref{recurrencelemma2} express
$g(\cdot,m)$ and $f(\cdot,m)$ using $g(\cdot,m-1)$, $f(\cdot,m-1)$
and $d^{(n-\sigma)}(\cdot,\cdot)$.
So we
can find
$f(\cdot,0)$, $f(\cdot,1)$, $\ldots$,
$f(\cdot,h-1)$, in that order.
A
$(2h)$-approximate $1$-median can thus be found:

\begin{lemma}[{cf.~\cite[Lemma~8]{Cha15}}]
\label{algorithmapproximationratiolemma}
{\sf Approx.-Median} in Fig.~\ref{mainalgorithm}
outputs a $(2h)$-approximate $1$-median w.r.t.\ $d$.
\end{lemma}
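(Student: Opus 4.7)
The plan is to combine the dynamic-programming correctness with Lemmas~\ref{theDPresultisthesumofpseudodistanceslemma} and~\ref{selectingfromtwo}. First, I would verify that lines~1--8 of {\sf Approx.-Median} correctly set up the primitives: Bertrand's postulate guarantees the existence of the prime $t$ sought in line~1, and the discussion preceding Eq.~(\ref{themodularversionofdistances}) guarantees that some $\sigma\in\{0,1\}$ chosen in line~2 satisfies $\mathop{\mathrm{gcd}}(t,n-\sigma)=1$. The queries made in lines~3--8 supply exactly the distances of the form $d(i,it+s\bmod(n-\sigma))$ and $d(n-1,i)$ that appear in all subsequent right-hand sides (recalling Eq.~(\ref{themodularversionofdistances})).

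Next I would show by induction on $m\in\{0,1,\ldots,h-1\}$ that after iteration $m$ of the loop spanning lines~14--23, we have $f[i][m]=f(i,m)$ and $g[i][m]=g(i,m)$ for every $i\in\{0,1,\ldots,n-\sigma-1\}$. The base case $m=0$ follows by direct comparison of lines~11--12 with Eqs.~(\ref{thebasecaseofthefirstfunction})--(\ref{thebasecaseofthesecondfunction}). The inductive step follows from Lemma~\ref{recurrencelemma1} for $g$ (lines~16--17) and Lemma~\ref{recurrencelemma2} for $f$ (lines~18--21), since each line in the algorithm transcribes one summand of the respective recurrence, with previously stored values $g[\cdot][m-1]$ and $f[\cdot][m-1]$ supplying the inductive terms.

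With the table correctly populated, Lemma~\ref{theDPresultisthesumofpseudodistanceslemma} gives
\[
f[i][h-1]=\sum_{j=0}^{n-\sigma-1}\,\hat{d}\left(i,\,it^h+j\bmod(n-\sigma)\right)
\]
for each $i\in\{0,1,\ldots,n-\sigma-1\}$. Consequently, the minimizer $\alpha$ computed in line~24 coincides with the $\alpha$ specified in Eq.~(\ref{theoptimalpointwithrespecttopseudodistance}). The test in line~25 is precisely the inequality~(\ref{selectorequation}) of Lemma~\ref{selectingfromtwo}: when it holds the algorithm returns $\alpha$, matching Eq.~(\ref{thenormalcase}); otherwise it returns $n-1$, matching Eq.~(\ref{thespecialcase}). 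In either case the returned point is a $(2h)$-approximate $1$-median w.r.t.\ $d$.

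I expect no real obstacle beyond bookkeeping. The only place where care is needed is ensuring that the DP table is indexed consistently after the $\bmod(n-\sigma)$ reduction in lines~17 and~20--21, so that the values $g[it+s_m\bmod(n-\sigma)][m-1]$ and $f[it+s'_m\bmod(n-\sigma)][m-1]$ that the recurrences demand are already available; this is immediate because $m-1$ is processed before $m$ in line~14, and all row indices lie in $\{0,1,\ldots,n-\sigma-1\}$ after the reduction. Once this indexing is verified, the conclusion is a one-line appeal to Lemma~\ref{selectingfromtwo}.
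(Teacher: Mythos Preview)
Your proposal is correct and follows essentially the same route as the paper: verify the DP tables via Eqs.~(\ref{thebasecaseofthefirstfunction})--(\ref{thebasecaseofthesecondfunction}) and Lemmas~\ref{recurrencelemma1}--\ref{recurrencelemma2}, invoke Lemma~\ref{theDPresultisthesumofpseudodistanceslemma} to identify $f[i][h-1]$ with the $\hat{d}$-sum so that line~24 yields the $\alpha$ of Eq.~(\ref{theoptimalpointwithrespecttopseudodistance}), and then apply Lemma~\ref{selectingfromtwo} to lines~25--29. The paper's proof is terser but structurally identical, additionally noting (as you also do) that every distance used is among those queried in lines~3--8.
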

\begin{proof}
By
Eqs.~(\ref{thebasecaseofthefirstfunction})--(\ref{thebasecaseofthesecondfunction}),
lines~10--13 of {\sf Approx.-Median}
find $f(\cdot,0)$ and $g(\cdot,0)$.
By
Lemmas~\ref{recurrencelemma1}~and~\ref{recurrencelemma2},
lines~14--23
find
$f(\cdot,m)$ and $g(\cdot,m)$
for an increasing $m\in\{1,2,\ldots,h-1\}$.
%
By Lemma~\ref{theDPresultisthesumofpseudodistanceslemma},
line~24 picks $\alpha$ as in
Eq.~(\ref{theoptimalpointwithrespecttopseudodistance}).
Also by Lemma~\ref{theDPresultisthesumofpseudodistanceslemma},
the condition in line~25 is the same as Eq.~(\ref{selectorequation}).
So by
Lemma~\ref{selectingfromtwo},
lines~25--29
output
a
$(2h)$-approximate
$1$-median w.r.t.\ $d$.

By line~9, $0\le s'_m\le t-1$ for all $m\in\{0,1,\ldots,h-1\}$.
So all $d$-distances in lines~11--12,~16~and~18--19
are queried for in line~5 of the loops in lines~3--8.
Instead,
the $d$-distances in lines~24--25 are
queried for in
line~7.
\end{proof}

Below is
our main theorem.

\begin{theorem}[{cf.~\cite[Theorem~9]{Cha15}}]
\label{maintheorem}
Let
$h\colon\mathbb{Z}^+\to\mathbb{Z}^+\setminus\{1\}$
be
any function
such that
$h(n)$ and
$\lceil n^{1/h(n)}\rceil$
are computable from $n$ in $O(h(n)\cdot n^{1+1/h(n)})$ time.
Then
{\sc metric $1$-median} has a
deterministic,
$O(h(n)\cdot n^{1+1/h(n)})$-time,
$O(n^{1+1/h(n)})$-query,
$(2\,h(n))$-approximation and
nonadaptive
algorithm.
\end{theorem}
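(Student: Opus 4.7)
The plan is to analyze algorithm {\sf Approx.-Median} (Fig.~\ref{mainalgorithm}) on three fronts: correctness, query complexity, and time complexity, with determinism and nonadaptivity falling out of the description. The approximation ratio is already settled by Lemma~\ref{algorithmapproximationratiolemma}, so the real work is resource accounting. I would first verify that lines~1--2 are realizable within the budget: by Bertrand's postulate a prime $t$ exists in $[\lceil n^{1/h}\rceil,\,2\lceil n^{1/h}\rceil]$, and trial division on each candidate costs $O(n^{1/(2h)})$, giving $O(n^{3/(2h)}) = O(n^{1+1/h})$ total, well within the stated bound (the hypothesis on computing $\lceil n^{1/h}\rceil$ already provides the $h\cdot n^{1+1/h}$ slack). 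For $\sigma$, since $\gcd(n-1,n)=1$ and $t$ is prime, at least one of $n-1,n$ is coprime to $t$, and checking which one is $O(\log n)$.

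Next I would bound the queries. The double loop in lines~3--8 touches each pair $d(i, it+s\bmod(n-\sigma))$ for $i\in\{0,\ldots,n-\sigma-1\}$ and $s\in\{0,\ldots,t-1\}$, together with the $n-\sigma$ distances $d(n-1,i)$ from line~7. Since $t=O(n^{1/h})$, this totals $O(nt + n) = O(n^{1+1/h})$ queries, and crucially the query schedule depends only on $n$, $h$, $t$, $\sigma$ (all functions of $n$), establishing nonadaptivity. I would also note that every distance read subsequently (in lines~11--23 and in the test at line~25) is among those already queried, so the query count is not inflated later; this was essentially observed at the end of Lemma~\ref{algorithmapproximationratiolemma}.

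For the time complexity I would account for each block separately. Lines~9--13 compute $f(\cdot,0)$ and $g(\cdot,0)$ in $O(nt) = O(n^{1+1/h})$ real-RAM operations by Eqs.~(\ref{thebasecaseofthefirstfunction})--(\ref{thebasecaseofthesecondfunction}). The nested loop in lines~14--23 runs for $h-1$ values of $m$; inside, each $i$ requires $O(t)$ additions by Lemmas~\ref{recurrencelemma1} and~\ref{recurrencelemma2}, yielding $O(h\cdot nt) = O(h\cdot n^{1+1/h})$. Line~24 is an $O(n)$ scan, and line~25 sums $n$ already-queried distances, so these are negligible. Adding everything to the $O(h\cdot n^{1+1/h})$ preprocessing of $h(n)$ and $\lceil n^{1/h(n)}\rceil$ gives the advertised total.

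I expect the only genuinely delicate step to be confirming the time bound for locating the prime $t$ and the coprime $\sigma$, because nothing in the hypothesis on $h$ directly promises primality testing resources; the brute-force trial-division argument above is what keeps this within budget. Everything else is a transparent bookkeeping exercise chained onto Lemma~\ref{algorithmapproximationratiolemma}, so the theorem will follow by assembling these complexity estimates together with the correctness statement already proved.
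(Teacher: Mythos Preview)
Your proposal is correct and follows essentially the same approach as the paper: invoke Lemma~\ref{algorithmapproximationratiolemma} for the approximation ratio, then do block-by-block resource accounting on {\sf Approx.-Median}, with the query bound coming from lines~3--8 and the dominant time cost coming from the $O(hnt)$ dynamic-programming loop. The only noteworthy difference is that the paper uses the AKS primality test to bound line~1 by $\tilde{O}(\sqrt{n})$, whereas you use trial division to get $O(n^{3/(2h)})$; both are comfortably within $O(h\cdot n^{1+1/h})$, and your choice is arguably the more elementary of the two.
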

\begin{proof}
By
Lemma~\ref{algorithmapproximationratiolemma},
{\sf Approx.-Median} is $(2h)$-approximate.
It is clearly
deterministic and nonadaptive.\footnote{Clearly, lines~3--8
of {\sf Approx.-Median}
make nonadaptive queries.}
By the
well-known AKS primality test,
line~1 of {\sf Approx.-Median}
takes
time
$$\left(\left\lceil n^{1/h}\right\rceil+1\right)
\cdot\log^{O(1)}\left(O\left(\lceil n^{1/h}\rceil\right)\right)
=\tilde{O}\left(\sqrt{n}\right),$$
where the
equality uses $h\ge2$.
Because
$\mathop{\mathrm{gcd}}(\cdot,\cdot)$
is polynomial-time computable and $t=O(n)$ by line~1,
line~2 takes $\log^{O(1)} n$ time.\footnote{In general,
$\mathop{\mathrm{gcd}}(\cdot,\cdot)$ takes time polynomial
in the lengths (not the values) of the binary representations of its inputs.}
Lines~3--8
make $O(nt)$ queries in $O(nt)$ time.
As $0\le s'_0\le t-1$ by line~9,
lines~10--13 take $O(nt)$ time.
By precomputing
$\{t^i\}_{i=0}^{h-1}$ and
$\{\sum_{r=0}^i\, s'_r\cdot t^r\}_{i=0}^{h-1}$
in $O(h)$
time,\footnote{Calculate $t^i=t\cdot t^{i-1}$ in the increasing order of $i\in\{1,2,\ldots,h-1\}$.
Then calculate $\sum_{r=0}^i\, s'_r\cdot t^r=s'_i\cdot t^i+\sum_{r=0}^{i-1}\, s'_r\cdot t^r$
in the increasing order of $i\in\{0,1,\ldots,h-1\}$.
Because $t=\Theta(n^{1/h})$ and $\sum_{r=0}^{h-1}\, s'_r\cdot t^r=n-\sigma-1$,
the numbers appearing in
calculating $\{t^i\}_{i=0}^{h-1}$ and
$\{\sum_{r=0}^i\, s'_r\cdot t^r\}_{i=0}^{h-1}$
are
$O(h+\lg n)$- and $O(\lg n)$-bit
long, respectively.
As $n^{1/h}=\Theta(1)$ for all $h\ge\lg n$, the theorem is stronger with $h=\lg n$ than
with $h>\lg n$ (because the desired time complexity, query complexity and approximation ratio
of $O(hn^{1+1/h})$, $O(n^{1+1/h})$ and $2h$ are $O(hn)$,
$O(n)$ and $2h$, respectively, for all $h\ge\lg n$, and the theorem is stronger with
a smaller time complexity and a smaller approximation ratio).
So we may assume $h\le \lg n$ in the proof.
Consequently, the numbers appearing in
calculating $\{t^i\}_{i=0}^{h-1}$ and
$\{\sum_{r=0}^i\, s'_r\cdot t^r\}_{i=0}^{h-1}$ are $O(\lg n)$-bit long.
That arithmetic over $O(\lg n)$-bit numbers take $O(1)$ time is standard under most
algorithmic models.}
lines~14--23 take $O(hnt)$ time.\footnote{Because $0\le s'_m\le t-1$ for
all $m\in\{0,1,\ldots,h-1\}$ by line~9 and $\{t^i\}_{i=0}^{h-1}$ has been precomputed,
each execution
of
lines~16--17,~19,~and~21
takes $O(t)$ time.
Instead, an execution of line~18 takes $O(1)$ time
because $\sum_{r=0}^{m-1}\,s'_r\cdot t^r$ has been precomputed.}
As
$t=\Theta(n^{1/h})$ by line~1,
the overall time (resp., query) complexity
of $O(hnt)$ (resp., $O(nt)$) is as desired.
\end{proof}

Note that the condition on $h$ in Theorem~\ref{maintheorem}
is mild; it holds for most commonly seen
functions.
For example, we have the following corollary:


\begin{corollary}\label{maincorollary}
{\sc Metric $1$-median} has a deterministic, $O(\exp(O(1/\epsilon))\cdot n\log n)$-time,
$O(\exp(O(1/\epsilon))\cdot n)$-query, $(\epsilon\log n)$-approximation and nonadaptive algorithm
for each constant $\epsilon>0$.
\end{corollary}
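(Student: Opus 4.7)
The plan is to instantiate Theorem~\ref{maintheorem} with a carefully chosen, slowly growing $h$. For the given constant $\epsilon>0$, I would set $h(n)=\max\{2,\,\lfloor(\epsilon/2)\log_2 n\rfloor\}$, which maps $\mathbb{Z}^+$ into $\mathbb{Z}^+\setminus\{1\}$ as required. The first thing to check is the computability hypothesis: $h(n)$ can be read off the binary length of $n$ in $O(\log n)$ time, and since $(\log_2 n)/h(n)\le 2/\epsilon+o(1)$, the quantity $\lceil n^{1/h(n)}\rceil$ is bounded by a constant of the form $\lceil 2^{2/\epsilon+o(1)}\rceil=\exp(O(1/\epsilon))$ and is therefore computable in $O(1)$ time (for each fixed $\epsilon$). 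Both fit comfortably inside the allowed budget $O(h(n)\cdot n^{1+1/h(n)})$.

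Next I would verify the three target quantities by substitution. Since $h(n)\le(\epsilon/2)\log_2 n$ for all large $n$, the approximation ratio $2h(n)$ promised by Theorem~\ref{maintheorem} is at most $\epsilon\log n$. Since $n^{1/h(n)}\le 2^{2/\epsilon+o(1)}=\exp(O(1/\epsilon))$, the query complexity $O(n^{1+1/h(n)})$ becomes $O(\exp(O(1/\epsilon))\cdot n)$, while the running time $O(h(n)\cdot n^{1+1/h(n)})$ becomes $O(\exp(O(1/\epsilon))\cdot n\log n)$. Determinism and nonadaptivity are inherited from Theorem~\ref{maintheorem}.

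The only mild annoyance is a boundary case: when $n$ is so small that $\lfloor(\epsilon/2)\log_2 n\rfloor<2$, our choice forces $h(n)=2$, and the nominal bound $2h(n)=4$ may exceed $\epsilon\log_2 n$. However, this only happens for $n$ below some constant $n_0(\epsilon)$, and on those finitely many inputs one can simply run any exact algorithm (e.g., brute force in $O(n^2)=O(1)$ time), which is a $1$-approximation and hence trivially an $(\epsilon\log n)$-approximation. Splicing the brute-force small-$n$ handler in front of the large-$n$ instantiation of Theorem~\ref{maintheorem} yields the corollary. I do not expect any deep obstacle: the argument is a parameter-tuning exercise, and the only bookkeeping subtleties are this boundary adjustment and absorbing the choice of logarithm base into the $\exp(O(1/\epsilon))$ factor.
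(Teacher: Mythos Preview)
Your proposal is correct and follows exactly the same approach as the paper, which simply says ``Take $h(n)=(\epsilon/2)\lg n$ in Theorem~\ref{maintheorem}.'' You are just more scrupulous than the paper about enforcing $h(n)\in\mathbb{Z}^+\setminus\{1\}$, verifying the computability hypothesis, and handling the finitely many small-$n$ inputs explicitly.
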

\begin{proof}
Take $h(n)=(\epsilon/2) \lg n$
in Theorem~\ref{maintheorem}.
\end{proof}

Corollary~\ref{maincorollary}
is stronger than taking $h(n)=(\epsilon/2) \lg n$ in
Theorem~\ref{oldresultgeneralized}.
But it leaves open
whether deterministic
$O(n\log n)$-time or
$O(n)$-query algorithms
for {\sc metric $1$-median} can be
$o(\log n)$-approximate.

Unlike
in line~5 of {\sf Approx.-Median},
Chang's~\cite{Cha15}
pseudo-distance function
requires to query for
$d(i,i+st^k\bmod{n})$ for all $i\in\{0,1,\ldots,n-1\}$,
$s\in\{0,1,\ldots,t-1\}$ and $k\in\{0,1,\ldots,h-1\}$ (albeit with
a slightly different $t$).
So his query complexity is $\Theta(hnt)=\Theta(h(n)\cdot n^{1+1/h(n)})$,
as compared to our $O(n^{1+1/h(n)})$
in Theorem~\ref{maintheorem}.

\comment{ 
\section{Conclusions}

Theorem~\ref{maintheorem} improves
the query complexity in
Theorem~\ref{oldresultgeneralized} to $O(n^{1+1/h(n)})$.
Future work remains to close the gap on the query complexities between
our result and
the following lower
bound:
\begin{theorem}[\cite{Cha18}]\label{currentlybestlowerbound}
{\sc Metric $1$-median} has no deterministic $o(n^{1+1/(h(n)-1)}/h(n))$-query
$(2h(n)\cdot(1-\epsilon))$-approximation algorithms for any constant $\epsilon>0$
and any $h\colon \mathbb{Z}^+\to
\mathbb{Z}^+\setminus\{1\}$ satisfying $h=o(n^{1/(h(n)-1)})$.
\end{theorem}
}

\section*{Acknowledgments}

The author is supported in part by the Ministry of Science and Technology of
Taiwan under grant 106-2221-E-155-013-.

\comment{ 
By Eq.~(\ref{pseudodistance}),
\begin{eqnarray}
\sum_{j=0}^{n-1}\, \hat{d}\left(i,i+j \bmod{n}\right)
=\sum_{k=0}^{h-1}\,
\sum_{j=0}^{n-1}\,
d\left(
i+\sum_{\ell=0}^{k-1}\,s_\ell(j)\cdot t^\ell \bmod{n},
i+\sum_{\ell=0}^k\,s_\ell(j)\cdot t^\ell \bmod{n}
\right)
\end{eqnarray}
for all $i\in\{0,1,\ldots,n-1\}$.

Define
$(s'_{h-1},s'_{h-2},\ldots,s'_0)\in\{0,1,\ldots,t-1\}^h$
to be
the $t$-ary representation of $n-1$.
For $m\in\{0,1,\ldots,h\}$,
{\small 
\begin{eqnarray}
A_m
&\stackrel{\text{def.}}{=}&
\left\{\left(s_{h-1},s_{h-2},\ldots,s_0\right)
\in\left\{0,1,\ldots,t-1\right\}^h
\mid
\left(
\sum_{r=m}^{h-1}\,s_r\cdot t^r=\sum_{r=m}^{h-1}\,s'_r\cdot t^r
\right)
\land \left(\sum_{r=0}^{h-1}\,s_r\cdot t^r
\le
n-1
\right)
\right\},\\
B_m
&\stackrel{\text{def.}}{=}&
\left\{\left(s_{h-1},s_{h-2},\ldots,s_0\right)
\in\left\{0,1,\ldots,t-1\right\}^h
\mid
\left(\sum_{r=m}^{h-1}\,s_r\cdot t^r<\sum_{r=m}^{h-1}\,s'_r\cdot t^r\right)
\land \left(\sum_{r=0}^{h-1}\,s_r\cdot t^r
\le
n-1
\right)
\right\}.
\end{eqnarray}
}
Clearly, $A_h$ is the set of $t$-ary representations of the numbers in
$\{0,1,\ldots,n-1\}$.


\begin{eqnarray}
&&\sum_{k=0}^{m-1}\,
\sum_{s_0,s_1,\ldots,s_{h-1}=0}^{t-1}\,
\chi\left[\,\left(\sum_{q=0}^{h-1}\,s_q\cdot t^q\le n-1\right)
\land
\left(\neg\bigwedge_{r=m}^{h-1}\left(s_r=s'_r\right)\right)
\,\right]
\nonumber\\
&&\cdot
d\left(
i+\sum_{\ell=0}^{k-1}\,s_\ell\cdot t^\ell \bmod{n},
i+\sum_{\ell=0}^k\,s_\ell\cdot t^\ell \bmod{n}
\right)
\end{eqnarray}
}

\appendix
\section{Proof of Lemma~\ref{recurrencelemma1}}\label{appendix1}

We begin with the following change-of-variable formula:

\begin{lemma}[{cf.~\cite[Lemma~6]{Cha15}}]
\label{tedioustwicechangeofvariables}
For all $m\in\{0,1,\ldots,h-1\}$ and $s_m$, $s_{m-1}$, $\ldots$,
$s_0\in\{0,1,\ldots,t-1\}$,
{\small 
\begin{eqnarray*}
&&\sum_{k=1}^{m}\,
d^{(n-\sigma)}
\left(it^k+\sum_{\ell=0}^{k-1}\, s_{m-\ell}\cdot t^{k-1-\ell},
\,
it^{k+1}+\sum_{\ell=0}^{k}\,s_{m-\ell}\cdot t^{k-\ell}
\right)
\nonumber\\
&=&
\sum_{k=0}^{m-1}\,
d^{(n-\sigma)}
\left(it^{k+1}+s_m\cdot t^k+\sum_{\ell=0}^{k-1}\, s_{m-1-\ell}\cdot t^{k-1-\ell},\,
it^{k+2}+s_m\cdot t^{k+1}+\sum_{\ell=0}^{k}\,s_{m-1-\ell}\cdot t^{k-\ell}
\right)
\end{eqnarray*}
}
\end{lemma}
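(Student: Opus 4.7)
The plan is to establish the identity by observing that both sides are sums of $m$ distance terms, differing only by a shift of the outer summation index $k$. I would first reindex the right-hand side via $k \mapsto k+1$, so that both outer sums range over $k \in \{1,2,\ldots,m\}$, and then prove the stronger statement that the two $d^{(n-\sigma)}$ summands agree term by term under this alignment.

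After the shift, the right-hand summand at index $k$ becomes the $d^{(n-\sigma)}$-distance between $it^k + s_m t^{k-1} + \sum_{\ell=0}^{k-2} s_{m-1-\ell}\, t^{k-2-\ell}$ and $it^{k+1} + s_m t^k + \sum_{\ell=0}^{k-1} s_{m-1-\ell}\, t^{k-1-\ell}$. To match this with the left-hand summand, I would peel off the $\ell=0$ term of each inner sum on the left, writing $\sum_{\ell=0}^{k-1} s_{m-\ell}\, t^{k-1-\ell} = s_m t^{k-1} + \sum_{\ell=1}^{k-1} s_{m-\ell}\, t^{k-1-\ell}$, and similarly for the second argument. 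A substitution $\ell \mapsto \ell+1$ in the residual sums then converts $\sum_{\ell=1}^{k-1} s_{m-\ell}\, t^{k-1-\ell}$ into $\sum_{\ell=0}^{k-2} s_{m-1-\ell}\, t^{k-2-\ell}$, matching the right-hand form. The same peel-and-reindex argument applies verbatim to the second coordinate, so the two arguments of $d^{(n-\sigma)}$ coincide at every $k$.

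There is no conceptual obstacle here: the lemma is pure bookkeeping around a change of summation variable. The only place where care is needed is the boundary case $k=1$, where the residual inner sum $\sum_{\ell=1}^{0}$ is empty and both sides collapse to $d^{(n-\sigma)}(it + s_m,\, it^2 + s_m t + s_{m-1})$; I would verify this base case explicitly to rule out any off-by-one error, and also note that the $m=0$ version of the lemma reduces to the trivial identity $0 = 0$ since both outer sums are empty. Summing the termwise equalities over $k = 1, 2, \ldots, m$ then yields the claimed identity.
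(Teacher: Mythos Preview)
Your proposal is correct and follows essentially the same approach as the paper: both proofs amount to the two index shifts $k\mapsto k+1$ on the outer sum and $\ell\mapsto \ell+1$ on the inner sums after peeling off the $\ell=0$ term. The only cosmetic difference is that the paper chains these substitutions linearly from the left-hand side to the right-hand side, whereas you shift the right-hand side first and then match termwise; the content is identical.
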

\begin{proof}
Clearly,
{\footnotesize 
\begin{eqnarray*}
&&\sum_{k=1}^{m}\,
d^{(n-\sigma)}
\left(it^k+\sum_{\ell=0}^{k-1}\, s_{m-\ell}\cdot t^{k-1-\ell},\,
it^{k+1}+\sum_{\ell=0}^{k}\,s_{m-\ell}\cdot t^{k-\ell}
\right)
\nonumber\\
&=&
\sum_{k=0}^{m-1}\,
d^{(n-\sigma)}
\left(it^{k+1}+\sum_{\ell=0}^{k}\, s_{m-\ell}\cdot t^{k-\ell},\,
it^{k+2}+\sum_{\ell=0}^{k+1}\,s_{m-\ell}\cdot t^{k+1-\ell}
\right)
\nonumber\\
&=&
\sum_{k=0}^{m-1}\,
d^{(n-\sigma)}
\left(it^{k+1}+s_m\cdot t^k+\sum_{\ell=1}^{k}\, s_{m-\ell}\cdot t^{k-\ell},\,
it^{k+2}+s_m\cdot t^{k+1}+\sum_{\ell=1}^{k+1}\,s_{m-\ell}\cdot t^{k+1-\ell}
\right)
\nonumber\\
&=&
\sum_{k=0}^{m-1}\,
d^{(n-\sigma)}
\left(it^{k+1}+s_m\cdot t^k+\sum_{\ell=0}^{k-1}\, s_{m-1-\ell}\cdot t^{k-1-\ell},\,
it^{k+2}+s_m\cdot t^{k+1}+\sum_{\ell=0}^{k}\,s_{m-1-\ell}\cdot t^{k-\ell}
\right),
\end{eqnarray*}
}
where
the first and the last equalities follow from
substituting $k$ with $k+1$ and $\ell$ with $\ell+1$, respectively.
\end{proof}

\begin{restate1}[{cf.~\cite[Lemma~6]{Cha15}}]
For all $i\in\{0,1,\ldots,n-\sigma-1\}$
and $m\in\{1,2,\ldots,h-1\}$,
\begin{eqnarray}
g\left(i,m
\right)
&=&
t^m\sum_{s_m=0}^{t-1}\,
d^{(n-\sigma)}\left(i,\, it+s_m\right)\nonumber\\
&+&\sum_{s_m=0}^{t-1}\,
g\left(it+s_m
\bmod{\left(n-\sigma\right)},\, m-1
\right).
\nonumber
\end{eqnarray}
\end{restate1}
\begin{proof}
By Eq.~(\ref{subsumlessthan}),
{\footnotesize 
\begin{eqnarray}
&&g\left(i,m\right)\label{simplerrecurrencestart}\\
&=&
\sum_{s_m=0}^{t-1}\,
\sum_{s_{m-1},s_{m-2},\ldots,s_0=0}^{t-1}\,
\left[\,d^{(n-\sigma)}\left(i,it+s_m
\right)
\vphantom{\sum_{\ell=m-1-k}^{m-2}\,s_\ell\cdot t^\ell}
\right.\nonumber\\
&&\left.
\right.\nonumber\\
&&
\left.
+\sum_{k=1}^{m}\,
d^{(n-\sigma)}
\left(it^k+\sum_{\ell=0}^{k-1}\, s_{m-\ell}\cdot t^{k-1-\ell},\,
it^{k+1}+\sum_{\ell=0}^{k}\,s_{m-\ell}\cdot t^{k-\ell}
\right)
\,\right]\nonumber\\
&=&
\sum_{s_m=0}^{t-1}\,
t^m\cdot
d^{(n-\sigma)}\left(i,it+s_m
\right)\nonumber\\
&+&
\sum_{s_m=0}^{t-1}\,
\sum_{s_{m-1},s_{m-2},\ldots,s_0=0}^{t-1}\,
\sum_{k=1}^{m}\,
d^{(n-\sigma)}
\left(it^k+\sum_{\ell=0}^{k-1}\, s_{m-\ell}\cdot t^{k-1-\ell},\,
it^{k+1}+\sum_{\ell=0}^{k}\,s_{m-\ell}\cdot t^{k-\ell}
\right),
\nonumber
\end{eqnarray}
}
where the last equality holds because
there are $t^m$ tuples $(s_{m-1},s_{m-2},\ldots,s_0)\in\{0,1,\ldots,t-1\}^m$.
Again by Eq.~(\ref{subsumlessthan}),
{\footnotesize 
\begin{eqnarray}
&&g\left(it+s_m
\bmod{\left(n-\sigma\right)},\,
m-1\right)\nonumber\\
&=&
\sum_{s_{m-1},s_{m-2},\ldots,s_0=0}^{t-1}\,
\sum_{k=0}^{m-1}\,
d^{(n-\sigma)}
\left(\left(it+s_m\right)t^k+\sum_{\ell=0}^{k-1}\,s_{m-1-\ell}\cdot t^{k-1-\ell},\,
\left(it+s_m\right)t^{k+1}+\sum_{\ell=0}^{k}\,s_{m-1-\ell}\cdot t^{k-\ell}
\right)
\,\,\,\,\,\,\,\,\,\,\,
\label{simplerrecurrencepart2}
\end{eqnarray}
}
for $s_m\in\{0,1,\ldots,t-1\}$,
where
we use
Eq.~(\ref{themodularversionofdistances}) as
well.
Eqs.~(\ref{simplerrecurrencestart})--(\ref{simplerrecurrencepart2})
and
Lemma~\ref{tedioustwicechangeofvariables}
complete the proof.
\comment{ 
it remains
to verify the easy fact that
\begin{eqnarray}
\sum_{s_m=0}^{t-1}\,
\sum_{s_{m-1},s_{m-2},\ldots,s_0=0}^{t-1}\,
d\left(i,it+s_m\right)
=
t^m\sum_{s_m=0}^{t-1}\,
d\left(it,s_m
\right).
\nonumber
\end{eqnarray}
}
\end{proof}

\section{Proof of Lemma~\ref{recurrencelemma2}}\label{appendix2}

For all
$m\in\{1,2,\ldots,h-1\}$ and
$s_m$, $s_{m-1}$, $\ldots$, $s_0\in\{0,1,\ldots,t-1\}$, consider
whether
\begin{eqnarray}
\sum_{r=0}^m\,s_r\cdot t^r\le \sum_{r=0}^m\,s'_r\cdot t^r
\label{notexceedingthenumber}
\end{eqnarray}
as follows:
\begin{enumerate}[(I)]
\item\label{myitem1}
If
$s_m<s'_m$,
then
Eq.~(\ref{notexceedingthenumber}) holds.
\item\label{myitem2}
If
$s_m=s'_m$,
then
Eq.~(\ref{notexceedingthenumber}) holds
if and only if
$(s_{m-1},s_{m-2},\ldots,s_0)$
is the $t$-ary representation of one of $0$, $1$, $\ldots$,
$\sum_{r=0}^{m-1}\,s'_r\cdot t^r$.
\item\label{myitem3}
If
$s_m>s'_m$,
then
Eq.~(\ref{notexceedingthenumber})
fails.
\end{enumerate}
Items~(\ref{myitem1})--(\ref{myitem3})
follow from the basics of
$t$-ary representations.


\begin{lemma}[{cf.~\cite[Lemma~7]{Cha15}}]
\label{summingsimpledistances}
For all $m\in\{1,2,\ldots,h-1\}$,
\begin{eqnarray}
&&\sum_{s_m=0}^{s'_m}\,
\sum_{s_{m-1},s_{m-2},\ldots,s_0=0}^{t-1}\,
\chi\left[\sum_{r=0}^m\, s_r\cdot t^r\le \sum_{r=0}^m\, s'_r\cdot t^r
\right]
\cdot
d^{(n-\sigma)}\left(i,it+s_m\right)\nonumber\\
&=&
\left(1+\sum_{r=0}^{m-1}\,s'_r\cdot t^r\right)\,
d^{(n-\sigma)}\left(i,it+s'_m\right)
+
t^m\sum_{s_m=0}^{s'_m-1}\,d^{(n-\sigma)}
\left(i,it+s_m\right).
\label{sumofsimpledistances}
\end{eqnarray}
\end{lemma}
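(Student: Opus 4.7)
The plan is to split the outer sum $\sum_{s_m=0}^{s'_m}$ into the two ranges $s_m\in\{0,1,\ldots,s'_m-1\}$ and $s_m=s'_m$, because items~(\ref{myitem1})--(\ref{myitem3}) tell us exactly how the indicator $\chi[\sum_{r=0}^m s_r t^r\le \sum_{r=0}^m s'_r t^r]$ behaves in each range. The first key observation is that $d^{(n-\sigma)}(i,it+s_m)$ does not depend on $s_{m-1},s_{m-2},\ldots,s_0$, so it factors out of the inner sum. Thus, for each fixed $s_m$, the inner sum reduces to the distance $d^{(n-\sigma)}(i,it+s_m)$ multiplied by the count of tuples $(s_{m-1},\ldots,s_0)\in\{0,1,\ldots,t-1\}^m$ satisfying the indicator constraint.

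For $s_m<s'_m$, item~(\ref{myitem1}) says the indicator is identically $1$, so the count of qualifying tuples equals $|\{0,1,\ldots,t-1\}^m|=t^m$. Summing over $s_m\in\{0,1,\ldots,s'_m-1\}$ yields
\[
t^m\sum_{s_m=0}^{s'_m-1}\, d^{(n-\sigma)}(i,it+s_m),
\]
which matches the second term on the right-hand side of Eq.~(\ref{sumofsimpledistances}). For $s_m=s'_m$, item~(\ref{myitem2}) says the indicator is $1$ if and only if $(s_{m-1},\ldots,s_0)$ is the $t$-ary representation of one of $0,1,\ldots,\sum_{r=0}^{m-1} s'_r\cdot t^r$, giving exactly $1+\sum_{r=0}^{m-1}s'_r\cdot t^r$ qualifying tuples by the uniqueness of $t$-ary representations. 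Multiplying by $d^{(n-\sigma)}(i,it+s'_m)$ reproduces the first term on the right-hand side of Eq.~(\ref{sumofsimpledistances}).

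Adding the two contributions gives Eq.~(\ref{sumofsimpledistances}). The two ranges exactly partition $\{0,1,\ldots,s'_m\}$, and item~(\ref{myitem3}) confirms that no $s_m>s'_m$ would have contributed anyway had we extended the outer sum. The proof is therefore essentially bookkeeping on top of items~(\ref{myitem1})--(\ref{myitem3}); the only mild pitfall is verifying that the distance factor truly is $s_{m-1},\ldots,s_0$-independent (it is, by inspection of its arguments), which is what allows the count-and-multiply decomposition to go through.
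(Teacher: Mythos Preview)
Your proof is correct and follows essentially the same approach as the paper's: the paper's proof simply states that items~(\ref{myitem2}) and~(\ref{myitem1}) account for the first and second terms of the right-hand side, respectively, with a footnote noting the counts $1+\sum_{r=0}^{m-1}s'_r t^r$ and $t^m$. Your write-up is just a more explicit rendering of the same case split and counting argument.
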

\begin{proof}
Items~(\ref{myitem2})~and~(\ref{myitem1})
account for the first and the second terms
of the right-hand side of
Eq.~(\ref{sumofsimpledistances}), respectively.\footnote{ 
Note
that there are $t^m$ tuples $(s_{m-1},s_{m-2},\ldots,s_0)\in\{0,1,\ldots,t-1\}^m$,
among which
$1+\sum_{r=0}^{m-1}\,s'_r\cdot t^r$
are $t$-ary
representations of
one of $0$, $1$, $\ldots$, $\sum_{r=0}^{m-1}\,s'_r\cdot t^r$.
} 
\end{proof}

\comment{ 
The following
representation of $f(\cdot,m)$ in terms of $f(\cdot,m-1)$ and $g(\cdot,m-1)$
is laborious but not hard to verify.
}

The
proof of
Lemma~\ref{recurrencelemma2}
just
uses
items~(\ref{myitem1})--(\ref{myitem3})
na{\"\i}vely and with great patience:

\begin{restate2}[{cf.~\cite[Lemma~7]{Cha15}}]
For all $i\in\{0,1,\ldots,n-\sigma-1\}$ and $m\in\{1,2,\ldots,h-1\}$,
\begin{eqnarray}
f\left(i,m\right)
&=&
\left(1+\sum_{r=0}^{m-1}\, s'_r\cdot t^r\right)
d^{(n-\sigma)}
\left(i,it+s'_m\right)\nonumber\\
&+&
t^m
\sum_{s_m=0}^{s'_m-1}\,
d^{(n-\sigma)}
\left(i,it+s_m\right)\nonumber\\
&+&
f\left(it+s'_m
\bmod{\left(n-\sigma\right)},\,
m-1\right)\nonumber\\
&+&\sum_{s_m=0}^{s'_m-1}\, g\left(it+s_m
\bmod{\left(n-\sigma\right)},\,
m-1
\right).
\nonumber
\end{eqnarray}
\end{restate2}
\begin{proof}
\comment{ 
Observe
the following
for all
$s_m$, $s_{m-1}$, $\ldots$, $s_0\in\{0,1,\ldots,t-1\}$:
\begin{enumerate}[(i)]
\item\label{item1}
If $s_m=s'_m$, then
$\sum_{r=0}^m\, s_r\cdot t^r\le \sum_{r=0}^m\, s'_r\cdot t^r$
if and only if
$\sum_{r=0}^{m-1}\, s_r\cdot t^r\le \sum_{r=0}^{m-1}\, s'_r\cdot t^r$;
\item\label{item2}
If $s_m<s'_m$, then
$\sum_{r=0}^m\, s_r\cdot t^r< \sum_{r=0}^m\, s'_r\cdot t^r$;
\item\label{item3}
If $s_m>s'_m$, then
$\sum_{r=0}^m\, s_r\cdot t^r> \sum_{r=0}^m\, s'_r\cdot t^r$.
\end{enumerate}
}
We have
{\footnotesize 
\begin{eqnarray}
&&
f\left(i,m\right)
\label{thethingwewanttorecurseon}\\
&\stackrel{\text{(\ref{subsumlessthanorequalto})}}{=}&
\sum_{s_m=0}^{t-1}\,
\sum_{s_{m-1},s_{m-2},\ldots,s_0=0}^{t-1}\,
\chi\left[\sum_{r=0}^m\, s_r\cdot t^r\le \sum_{r=0}^m\, s'_r\cdot t^r
\right]
\cdot
\left[d^{(n-\sigma)}\left(i,it+s_m\right)
\vphantom{\sum_{\ell=m-2-k}^{m-2}\, s_\ell\cdot t^\ell\bmod{n}}
\right.\nonumber\\
&&\left.+\sum_{k=1}^{m}\, d^{(n-\sigma)}
\left(it^{k}+\sum_{\ell=0}^{k-1}\,
s_{m-\ell}\cdot t^{k-1-\ell},\, it^{k+1}+\sum_{\ell=0}^{k}\,
s_{m-\ell}\cdot t^{k-\ell}
\right)
\right]\nonumber\\
&\stackrel{\text{Lemma~\ref{tedioustwicechangeofvariables}}}{=}&
\sum_{s_m=0}^{t-1}\,
\sum_{s_{m-1},s_{m-2},\ldots,s_0=0}^{t-1}\,
\chi\left[\sum_{r=0}^m\, s_r\cdot t^r\le \sum_{r=0}^m\, s'_r\cdot t^r
\right]
\cdot
\left[d^{(n-\sigma)}\left(i,it+s_m\right)
\vphantom{\sum_{k=0}^{m-1}\, d\left(it^{k+1}+s_m\cdot t^k+\sum_{\ell=0}^{k-1}\,s_{m-1-\ell}\cdot t^{k-1-\ell},\, it^{k+2}+s_m\cdot t^{k+1}+\sum_{\ell=0}^{k}\,s_{m-1-\ell}\cdot t^{k-\ell}\right)}
\right.\nonumber\\
&&\left.+\sum_{k=0}^{m-1}\,
d^{(n-\sigma)}
\left(it^{k+1}+s_m\cdot t^k+\sum_{\ell=0}^{k-1}\,
s_{m-1-\ell}\cdot t^{k-1-\ell},\, it^{k+2}+s_m\cdot t^{k+1}+\sum_{\ell=0}^{k}\,
s_{m-1-\ell}\cdot t^{k-\ell}
\right)
\right]
\nonumber\\
&\stackrel{\text{item~(\ref{myitem3})}}{=}&
\sum_{s_m=0}^{s'_m}\,
\sum_{s_{m-1},s_{m-2},\ldots,s_0=0}^{t-1}\,
\chi\left[\sum_{r=0}^m\, s_r\cdot t^r\le \sum_{r=0}^m\, s'_r\cdot t^r
\right]
\cdot
\left[d^{(n-\sigma)}\left(i,it+s_m\right)
\vphantom{\sum_{k=0}^{m-1}\, d\left(it^{k+1}+s_m\cdot t^k+\sum_{\ell=0}^{k-1}\,s_{m-1-\ell}\cdot t^{k-1-\ell},\, it^{k+2}+s_m\cdot t^{k+1}+\sum_{\ell=0}^{k}\,s_{m-1-\ell}\cdot t^{k-\ell}\right)}
\right.\nonumber\\
&&\left.+\sum_{k=0}^{m-1}\,
d^{(n-\sigma)}
\left(it^{k+1}+s_m\cdot t^k+\sum_{\ell=0}^{k-1}\,
s_{m-1-\ell}\cdot t^{k-1-\ell},\, it^{k+2}+s_m\cdot t^{k+1}+\sum_{\ell=0}^{k}\,
s_{m-1-\ell}\cdot t^{k-\ell}
\right)
\right].
\nonumber
\end{eqnarray}
}
So by Lemma~\ref{summingsimpledistances},
it remains to prove that
{\footnotesize 
\begin{eqnarray}
&&\sum_{s_m=0}^{s'_m}\,
\sum_{s_{m-1},s_{m-2},\ldots,s_0=0}^{t-1}\,
\chi\left[\sum_{r=0}^m\, s_r\cdot t^r\le \sum_{r=0}^m\, s'_r\cdot t^r
\right]
\nonumber\\
&\cdot&
\sum_{k=0}^{m-1}\,
d^{(n-\sigma)}\left(it^{k+1}+s_m\cdot t^k+\sum_{\ell=0}^{k-1}\,
s_{m-1-\ell}\cdot t^{k-1-\ell},\, it^{k+2}+s_m\cdot t^{k+1}+\sum_{\ell=0}^{k}\,
s_{m-1-\ell}\cdot t^{k-\ell}
\right)
\,\,\,\,\,\,\,\,\,\,\,\,
\nonumber\\
&=&
f\left(it+s'_m
\bmod{\left(n-\sigma\right)},\,m-1\right)
+\sum_{s_m=0}^{s'_m-1}\,
g\left(it+s_m
\bmod{\left(n-\sigma\right)},\,m-1
\right).
\label{whatremains}
\end{eqnarray}
}
\comment{ 
\begin{eqnarray}
&\stackrel{\text{item~(\ref{item3})}}{=}&
\sum_{s_m=0}^{s'_m}\,
\sum_{s_{m-1},s_{m-2},\ldots,s_0=0}^{t-1}\,
\left(
\chi\left[\left(s_m=s'_m\right)\land
\left(\sum_{r=0}^m\, s_r\cdot t^r\le \sum_{r=0}^m\, s'_r\cdot t^r
\right)
\right]\right.\nonumber\\
&&\left.+\chi\left[\left(s_m<s'_m\right)\land
\left(\sum_{r=0}^m\, s_r\cdot t^r\le \sum_{r=0}^m\, s'_r\cdot t^r
\right)
\right]
\right)
\cdot
\left(d\left(i,i+s_m\cdot t^m\bmod{n}\right)
\vphantom{\sum_{\ell=m-2-k}^{m-2}\, s_\ell\cdot t^\ell\bmod{n}}
\right.\nonumber\\
&&\left.+\sum_{k=0}^{m-1}\, d\left(i+s_m\cdot t^m+\sum_{\ell=m-k}^{m-1}\,
s_\ell\cdot t^\ell\bmod{n}, i+s_m\cdot t^m+\sum_{\ell=m-1-k}^{m-1}\,
s_\ell\cdot t^\ell\bmod{n}
\right)
\right)\nonumber\\
&\stackrel{\text{item~(\ref{item1})}}{=}&
\sum_{s_m=0}^{s'_m}\,
\sum_{s_{m-1},s_{m-2},\ldots,s_0=0}^{t-1}\,
\left(
\chi\left[\left(s_m=s'_m\right)\land
\left(\sum_{r=0}^{m-1}\, s_r\cdot t^r\le \sum_{r=0}^{m-1}\, s'_r\cdot t^r
\right)
\right]\right.\nonumber\\
&&\left.+\chi\left[\left(s_m<s'_m\right)\land
\left(\sum_{r=0}^m\, s_r\cdot t^r\le \sum_{r=0}^m\, s'_r\cdot t^r
\right)
\right]
\right)
\cdot
\left(d\left(i,i+s_m\cdot t^m\bmod{n}\right)
\vphantom{\sum_{\ell=m-2-k}^{m-2}\, s_\ell\cdot t^\ell\bmod{n}}
\right.\nonumber\\
&&\left.+\sum_{k=0}^{m-1}\, d\left(i+s_m\cdot t^m+\sum_{\ell=m-k}^{m-1}\,
s_\ell\cdot t^\ell\bmod{n}, i+s_m\cdot t^m+\sum_{\ell=m-1-k}^{m-1}\,
s_\ell\cdot t^\ell\bmod{n}
\right)
\right)\nonumber\\
&\stackrel{\text{item~(\ref{item2})}}{=}&
\sum_{s_m=0}^{s'_m}\,
\sum_{s_{m-1},s_{m-2},\ldots,s_0=0}^{t-1}\,
\left(
\chi\left[\left(s_m=s'_m\right)\land
\left(\sum_{r=0}^{m-1}\, s_r\cdot t^r\le \sum_{r=0}^{m-1}\, s'_r\cdot t^r
\right)
\right]\right.\nonumber\\
&&\left.+\chi\left[s_m<s'_m
\right]
\vphantom{\sum_{r=0}^{m-2}\, s'_r\cdot t^r}
\right)
\cdot
\left(d\left(i,i+s_m\cdot t^m\bmod{n}\right)
\vphantom{\sum_{\ell=m-2-k}^{m-2}\, s_\ell\cdot t^\ell\bmod{n}}
\right.\nonumber\\
&&\left.+\sum_{k=0}^{m-1}\, d\left(i+s_m\cdot t^m+\sum_{\ell=m-k}^{m-1}\,
s_\ell\cdot t^\ell\bmod{n}, i+s_m\cdot t^m+\sum_{\ell=m-1-k}^{m-1}\,
s_\ell\cdot t^\ell\bmod{n}
\right)
\right).
\nonumber
\end{eqnarray}
}

Separating
the left-hand side of
Eq.~(\ref{whatremains})
according to whether $s_m=s'_m$ or $s_m\le s'_m-1$,
{\footnotesize 
\begin{eqnarray*}
&&\sum_{s_m=0}^{s'_m}\,
\sum_{s_{m-1},s_{m-2},\ldots,s_0=0}^{t-1}\,
\chi\left[\sum_{r=0}^m\, s_r\cdot t^r\le \sum_{r=0}^m\, s'_r\cdot t^r
\right]
\nonumber\\
&\cdot&
\sum_{k=0}^{m-1}\,
d^{(n-\sigma)}
\left(it^{k+1}+s_m\cdot t^k+\sum_{\ell=0}^{k-1}\,
s_{m-1-\ell}\cdot t^{k-1-\ell},\, it^{k+2}+s_m\cdot t^{k+1}+\sum_{\ell=0}^{k}\,
s_{m-1-\ell}\cdot t^{k-\ell}
\right)\nonumber\\
&=&
\sum_{s_{m-1},s_{m-2},\ldots,s_0=0}^{t-1}\,
\chi\left[\sum_{r=0}^{m-1}\, s_r\cdot t^r\le \sum_{r=0}^{m-1}\, s'_r\cdot t^r
\right]
\nonumber\\
&\cdot&
\sum_{k=0}^{m-1}\,
d^{(n-\sigma)}
\left(it^{k+1}+s'_m\cdot t^k+\sum_{\ell=0}^{k-1}\,
s_{m-1-\ell}\cdot t^{k-1-\ell},\, it^{k+2}+s'_m\cdot t^{k+1}+\sum_{\ell=0}^{k}\,
s_{m-1-\ell}\cdot t^{k-\ell}
\right)\nonumber\\
&+&
\sum_{s_m=0}^{s'_m-1}\,
\sum_{s_{m-1},s_{m-2},\ldots,s_0=0}^{t-1}\,
\chi\left[\sum_{r=0}^m\, s_r\cdot t^r\le \sum_{r=0}^m\, s'_r\cdot t^r\right]
\nonumber\\
&\cdot&
\sum_{k=0}^{m-1}\,
d^{(n-\sigma)}
\left(it^{k+1}+s_m\cdot t^k+\sum_{\ell=0}^{k-1}\,
s_{m-1-\ell}\cdot t^{k-1-\ell},\, it^{k+2}+s_m\cdot t^{k+1}+\sum_{\ell=0}^{k}\,
s_{m-1-\ell}\cdot t^{k-\ell}
\right)\nonumber\\
&\stackrel{\text{item~(\ref{myitem1})}}{=}&
\sum_{s_{m-1},s_{m-2},\ldots,s_0=0}^{t-1}\,
\chi\left[\sum_{r=0}^{m-1}\, s_r\cdot t^r\le \sum_{r=0}^{m-1}\, s'_r\cdot t^r
\right]
\nonumber\\
&\cdot&
\sum_{k=0}^{m-1}\,
d^{(n-\sigma)}
\left(it^{k+1}+s'_m\cdot t^k+\sum_{\ell=0}^{k-1}\,
s_{m-1-\ell}\cdot t^{k-1-\ell},\, it^{k+2}+s'_m\cdot t^{k+1}+\sum_{\ell=0}^{k}\,
s_{m-1-\ell}\cdot t^{k-\ell}
\right)\nonumber\\
&+&
\sum_{s_m=0}^{s'_m-1}\,
\sum_{s_{m-1},s_{m-2},\ldots,s_0=0}^{t-1}\,
\nonumber\\
&&
\sum_{k=0}^{m-1}\,
d^{(n-\sigma)}
\left(it^{k+1}+s_m\cdot t^k+\sum_{\ell=0}^{k-1}\,
s_{m-1-\ell}\cdot t^{k-1-\ell},\, it^{k+2}+s_m\cdot t^{k+1}+\sum_{\ell=0}^{k}\,
s_{m-1-\ell}\cdot t^{k-\ell}
\right)
\nonumber\\
&\stackrel{\text{(\ref{themodularversionofdistances})}}{=}&
\sum_{s_{m-1},s_{m-2},\ldots,s_0=0}^{t-1}\,
\chi\left[\sum_{r=0}^{m-1}\, s_r\cdot t^r\le \sum_{r=0}^{m-1}\, s'_r\cdot t^r
\right]
\nonumber\\
&\cdot&
\sum_{k=0}^{m-1}\,
d^{(n-\sigma)}
\left(\left(it+s'_m
\bmod{\left(n-\sigma\right)}
\right) t^k+\sum_{\ell=0}^{k-1}\,
s_{m-1-\ell}\cdot t^{k-1-\ell},
\right.\nonumber\\
&&\left.
\, \left(it+s'_m
\bmod{\left(n-\sigma\right)}
\right) t^{k+1}+\sum_{\ell=0}^{k}\,
s_{m-1-\ell}\cdot t^{k-\ell}
\right)\nonumber\\
&+&
\sum_{s_m=0}^{s'_m-1}\,
\sum_{s_{m-1},s_{m-2},\ldots,s_0=0}^{t-1}\,
\nonumber\\
&&
\sum_{k=0}^{m-1}\,
d^{(n-\sigma)}
\left(\left(it+s_m
\bmod{\left(n-\sigma\right)}
\right) t^k+\sum_{\ell=0}^{k-1}\,
s_{m-1-\ell}\cdot t^{k-1-\ell},
\right.\nonumber\\
&&\left.
\, \left(it+s_m
\bmod{\left(n-\sigma\right)}
\right) t^{k+1}+\sum_{\ell=0}^{k}\,
s_{m-1-\ell}\cdot t^{k-\ell}
\right)
\nonumber\\
&\stackrel{\text{(\ref{subsumlessthanorequalto})--(\ref{subsumlessthan})}}{=}&
f\left(it+s'_m
\bmod{\left(n-\sigma\right)},\,m-1\right)
+\sum_{s_m=0}^{s'_m-1}\,
g\left(it+s_m
\bmod{\left(n-\sigma\right)},\,m-1
\right).
\end{eqnarray*}
}
\comment{ 
By Eq.~(\ref{subsumlessthanorequalto}),
{\footnotesize 
\begin{eqnarray}
&&f\left(it+s'_m,m-1
\right)\nonumber\\
&=& \sum_{s_{m-1},s_{m-2},\ldots,s_0=0}^{t-1}\,
\chi\left[\,\sum_{r=0}^{m-1}\,s_r\cdot t^r
\le \sum_{r=0}^{m-1}\,s'_r\cdot t^r
\,\right]
\nonumber\\
&\cdot&
\sum_{k=0}^{m-1}\,
d\left(\left(it+s'_m\right)t^k+\sum_{\ell=0}^{k-1}\,s_{m-1-\ell}\cdot
t^{k-1-\ell},\,
\left(it+s'_m\right)t^{k+1}+\sum_{\ell=0}^{k}\,s_{m-1-\ell}\cdot
t^{k-\ell}
\right).
\label{thefterm}
\comment{ 
&=&
\sum_{s_m=0}^{s'_m}\,
\sum_{s_{m-1},s_{m-2},\ldots,s_0=0}^{t-1}\,
\chi\left[\,\left(s_m=s'_m\right)\land
\left(\sum_{r=0}^{m-1}\,s_r\cdot t^r
\le \sum_{r=0}^{m-1}\,s'_r\cdot t^r\right)
\,\right]\nonumber\\
&\cdot&\sum_{k=0}^{m-1}\,
d\left(i+s_m\cdot t^m+\sum_{\ell=m-k}^{m-1}\,s_\ell\cdot
t^\ell\bmod{n},
i+s_m\cdot t^m+\sum_{\ell=m-1-k}^{m-1}\,s_\ell\cdot
t^\ell\bmod{n}
\right).
\nonumber
}
\end{eqnarray}
}
When $s_m=s'_m$
coinciding with the left-hand side of
Eq.~(\ref{whatremains})
By Eq.~(\ref{subsumlessthan}),
{\small 
\begin{eqnarray}
&&\sum_{s_m=0}^{s'_m-1}\,
g\left(i+s_m\cdot t^m\bmod{n},m-1
\right)\nonumber\\
&=&
\sum_{s_m=0}^{s'_m}\,
\sum_{s_{m-1},s_{m-2},\ldots,s_0=0}^{t-1}\,
\chi\left[\,s_m<s'_m\,\right]
\nonumber\\
&\cdot&
\sum_{k=0}^{m-1}\,
d\left(i+s_m\cdot t^m+\sum_{\ell=m-k}^{m-1}\, s_\ell\cdot t^\ell
\bmod{n}, i+s_m\cdot t^m+\sum_{\ell=m-1-k}^{m-1}\, s_\ell\cdot
t^\ell \bmod{n}
\right).
\nonumber
\end{eqnarray}
}

Because each number in $\{0,1,\ldots,\sum_{r=0}^{m-1}\,s'_r\cdot t^r\}$
can be written
uniquely
as
$\sum_{r=0}^{m-1}\,s_r\cdot t^r$,
where $s_{m-1}$, $s_{m-2}$, $\ldots$, $s_0\in\{0,1,\ldots,t-1\}$,
\begin{eqnarray}
&&\sum_{s_m=0}^{s'_m}\,
\sum_{s_{m-1},s_{m-2},\ldots,s_0=0}^{t-1}\,
\chi\left[\,\left(s_m=s'_m\right)\land
\left(\sum_{r=0}^{m-1}\,s_r\cdot t^r \le \sum_{r=0}^{m-1}\, s'_r\cdot t^r
\right)
\,\right]\nonumber\\
&\cdot&
d\left(i,i+s_m\cdot t^m\bmod{n}\right)\nonumber\\
&=&
\sum_{s_m=0}^{s'_m}\,
\left(1+\sum_{r=0}^{m-1}\, s'_r\cdot t^r\right)
\cdot
\chi\left[\,s_m=s'_m\,\right]
\cdot d\left(i,i+s_m\cdot t^m\bmod{n}\right)\nonumber\\
&=&
\left(1+\sum_{r=0}^{m-1}\, s'_r\cdot t^r\right)
d\left(i,i+s'_m\cdot t^m\bmod{n}\right).
\nonumber
\end{eqnarray}
Finally,
\begin{eqnarray}
&&\sum_{s_m=0}^{s'_m}\,
\sum_{s_{m-1},s_{m-2},\ldots,s_0=0}^{t-1}\,
\chi\left[\,s_m<s'_m
\,\right]
\cdot d\left(i,i+s_m\cdot t^m\bmod{n}
\right)\nonumber\\
&=& t^m \sum_{s_m=0}^{s'_m-1}\, d\left(i,i+s_m\cdot
t^m\bmod{n}
\right).
\label{lastoftediousequations}
\end{eqnarray}
Eqs.~(\ref{thethingwewanttorecurseon})--(\ref{lastoftediousequations})
complete the proof.
}
\end{proof}

\bibliographystyle{plain}
\bibliography{median_shaving_small_factor}

\noindent

\end{document}